\newcommand{\NP}{\mathsf{NP}}
\title{Existential definability over the subword ordering}
\def \subword{\preccurlyeq}
\def \ssubword{\prec}
\def \nats{\mathbb{N}}
\def \RE{\mathsf{RE}}
\def \cT{\mathcal{T}}
\newcommand{\N}{\mathbb{N}}
\newcommand{\FOconst}[1]{(#1^*,{\subword},(w)_{w\in #1^*})}
\newcommand{\FOsomeconst}[2]{(#1^*,{\subword},#2)}
\newcommand{\FOpure}[1]{(#1^*,{\subword})}
\author[P.~Baumann]{Pascal Baumann\lmcsorcid{0000-0002-9371-0807}}
\author[M.~Ganardi]{Moses Ganardi\lmcsorcid{0000-0002-0775-7781}}
\author[R.~S.~Thinniyam]{Ramanathan S. Thinniyam\lmcsorcid{0000-0002-9926-0931}}
\author[G.~Zetzsche]{Georg Zetzsche\lmcsorcid{0000-0002-6421-4388}}
\address{Max Planck Institute for Software Systems (MPI-SWS), Kaiserslautern, Germany}
\email{\{pbaumann, ganardi, thinniyam, georg\}@mpi-sws.org}
\begin{document}

\begin{abstract}
  We study first-order logic (FO) over the structure consisting of finite words
  over some alphabet~$A$, together with the (non-contiguous) subword ordering.
  In terms of decidability of quantifier alternation fragments, this logic is
  well-understood: If every word is available as a constant, then even the
  $\Sigma_1$ (i.e., existential) fragment is undecidable, already for binary
  alphabets $A$.
  
  However, up to now, little is known about the expressiveness of the
  quantifier alternation fragments: For example, the undecidability proof for
  the existential fragment relies on Diophantine equations and only
  shows that recursively enumerable languages over a singleton alphabet (and
  some auxiliary predicates) are definable.
  
  We show that if $|A|\ge 3$, then a relation is definable in the existential
  fragment over $A$ with constants if and only if it is recursively enumerable.
  This implies characterizations for all fragments~$\Sigma_i$: If $|A|\ge 3$,
  then a relation is definable in $\Sigma_i$ if and only if it belongs to the
  $i$-th level of the arithmetical hierarchy.  In addition, our result yields
  an analogous complete description of the $\Sigma_i$-fragments for $i\ge 2$ of
  the \emph{pure logic}, where the words of $A^*$ are not available as
  constants.
\end{abstract}

\maketitle

\section{Introduction}
\label{sec:introduction}

\subsection*{The subword ordering} 
A word $u$ is a \emph{subword} of another word $v$ if $u$ can be obtained from $v$ by deleting letters at an arbitrary set of positions.
The subword ordering has been studied intensively over the last few decades. On
the one hand, it appears in many classical results of theoretical computer science. For example, subwords have been a central topic in string
algorithms~\cite{baeza1991,elzinga2008algorithms,maier1978complexity}. Moreover, their combinatorial properties are the basis for verifying lossy channel systems~\cite{abdulla1996verifying}.
Particularly in recent years, subwords have received a considerable amount of attention.
Notable examples include lower bounds in fine-grained complexity~\cite{BringmannK15,BringmannK18}, algorithms to compute the set of all subwords of formal languages~\cite{DBLP:conf/lics/AtigCHKSZ16,DBLP:conf/mfcs/AtigMMS17,DBLP:conf/icalp/BarozziniCCP20,DBLP:conf/lics/ClementePSW16,DBLP:conf/icalp/HabermehlMW10,DBLP:conf/popl/HagueKO16,DBLP:conf/icalp/Zetzsche15,DBLP:conf/stacs/Zetzsche15,Zetzsche2016a,DBLP:conf/lics/Zetzsche18,Goubault-Larrecq2020,DBLP:conf/concur/AnandZ23}, and applications thereof to infinite-state verification~\cite{DBLP:journals/corr/abs-1111-1011,DBLP:conf/concur/TorreMW15,DBLP:journals/pacmpl/BaumannMTZ22,DBLP:journals/lmcs/MajumdarTZ22,DBLP:conf/icalp/BaumannMTZ20,DBLP:conf/icalp/0001GMTZ23,DBLP:conf/icalp/0001GMTZ23a}.
Subwords are also the basis of Simon's congruence~\cite{SakarovitchSimon1997}, which has recently been studied
from algorithmic~\cite{FleischerK18,GawrychowskiKKM21,DBLP:conf/rp/FleischmannKKMNSW23} and combinatorial~\cite{BarkerFHMN20,DayFKKMS21,KarandikarKS15,KarandikarS19,DBLP:conf/cwords/SchnoebelenV23} viewpoints.

\subsection*{First-order logic over subwords}
The importance of subwords has motivated the study of first-order logics (FO) over
the subword ordering. This has been considered in two variants: In the \emph{pure logic}, one has FO over the structure $\FOpure{A}$, where $A$ is an alphabet and $\subword$ is the subword ordering. In the version \emph{with constants}, we have the structure $\FOconst{A}$, which has a constant for each word from $A^*$.
Traditionally for FO, the primary questions are \emph{decidability} and \emph{definability}, particularly regarding quantifier alternation fragments~$\Sigma_i$. Here, decidability refers to the \emph{truth problem}: 
Given a formula $\varphi$ in a particular fragment over $\FOpure{A}$ or $\FOconst{A}$, respectively, does  $\varphi$ hold? 
By definability, we mean understanding which relations can be defined by formulas in a particular fragment. \linebreak
The $\Sigma_i$-fragment consists of formulas in prenex form that begin with existential quantifiers and then alternate $i-1$ times between blocks of universal and existential quantifiers. For example, the formula 
\[ \exists x\colon (a\not\subword x~\vee~b\not\subword x)~\wedge~x\not\subword u~\wedge~x\subword v \]
belongs to the $\Sigma_1$-fragment, also called the \emph{existential fragment} over $\FOconst{A}$ with $A=\{a,b\}$. The formula has free variables $u,v$ and refers to the constants $a$ and $b$. It holds if and only if $v$ has more $b$'s or more $a$'s than $u$.

For FO over subwords, decidability is well-understood. 
In the pure logic, the $\Sigma_2$-fragment is undecidable, already over two letters~\cite[Corollary~III.6]{HalfonSZ17}, whereas the $\Sigma_1$-fragment (i.e., existential formulas) is decidable~\cite[Theorem 2.2]{kuske2006theories} and $\NP$-complete~\cite[Theorem 2.1]{KarandikarS15}. 
This fueled hope that the $\Sigma_1$-fragment might even be decidable with constants, but this turned out to be undecidable, already over two letters~\cite[Theorem III.3]{HalfonSZ17}. Decidability (and complexity) have also been studied for the two-variable fragment~\cite{KarandikarS15,KarandikarS19,kuske_et_al:LIPIcs:2020:12729}, and extended with counting quantifiers and regular predicates~\cite{kuske_et_al:LIPIcs:2020:12729,KuskeZ19}.

Nevertheless, little is known about definability. Kudinov, Selivanov, and
Yartseva have shown that using arbitrary first-order formulas over
$\FOpure{A}$, one can define exactly the relations from the arithmetical
hierarchy\footnote{Also known as the Kleene–Mostowski hierarchy}
that are invariant under automorphisms of
$\FOpure{A}$~\cite[Theorem 5]{kudinov2010definability}, if $|A|\ge 2$.
However, this does not explain definability of the $\Sigma_i$-fragments. For
example, in order to define all recursively enumerable languages, as far as we can see, their proof
requires several quantifier alternations. An undecidability proof by
Karandikar and Schnoebelen~\cite[Theorem 4.6]{KarandikarS15} for the
$\Sigma_2$-fragment can easily be adapted to show that for each alphabet $A$,
there exists a larger alphabet $B$ such that every recursively enumerable
language $L\subseteq A^*$ is definable in the $\Sigma_2$-fragment over
$\FOconst{B}$. However, a full description of the expressiveness of the
$\Sigma_2$-fragment is missing.

\subsection*{Existential formulas}
The expressiveness of existential formulas is even further from being
understood.  The undecidability proof in \cite{HalfonSZ17} reduces from
solvability of Diophantine equations, i.e., polynomial equations over integers,
which is a well-known undecidable problem~\cite{matiyasevich1993hilbert}. To this end, it
is shown in \cite{HalfonSZ17} that the relations $\mathsf{ADD}=\{(a^m,a^n,a^{m+n}) \mid m,n\in\N\}$ and
$\mathsf{MULT}=\{(a^m,a^n,a^{m\cdot n}) \mid m,n\in\N\}$ are definable existentially
using the subword ordering, if one has at least two letters. Since Diophantine equations can be used to
define all recursively enumerable relations over natural numbers, this implies
that all recursively enumerable relations involving a single letter are definable
existentially. However, this says little about which languages (let alone relations)
over more than one letter are definable. For example, it is not clear whether the language of all $w\in\{a,b\}^*$
that do \emph{not} contain $aba$ as an infix, or the reversal relation $\mathsf{REV}_A=\{(u,v) \mid u,v\in A^*,~\text{$v$ is the reversal of $u$}\}$, 
are definable---it seems particularly difficult to define them over the subword ordering using the methods from \cite{HalfonSZ17}.

\subsection*{Contribution}
We show that for any alphabet $A$ with $|A| \geq 3$, every recursively
enumerable relation $R \subseteq (A^*)^k$, $k\in\N$, is existentially definable in
$\FOconst{A}$. In fact, similarly to an observation made in \cite{HalfonSZ17},
we even show that there is a single sufficiently complex word $W \in A^*$ such that
the structure $\FOsomeconst{A}{W}$ with just this single constant symbol suffices to
existentially define all recursively enumerable relations.
Since every existentially definable relation is clearly
recursively enumerable (via a simple enumerative algorithm), this completely
describes the expressiveness of existential formulas for $|A|\ge 3$. 
Despite the undecidability of the existential fragment~\cite{HalfonSZ17}, we find it surprising that all recursively enumerable relations---including relations like $\mathsf{REV}_A$---are existentially definable. 

Our result yields characterizations of the $\Sigma_i$-fragments for every~$i\ge
2$:  It implies that for each $i\ge 2$, the $\Sigma_i$-fragment over
$\FOconst{A}$ can define exactly the relations in $\Sigma_i^0$, the $i$-th
level of the arithmetical hierarchy, assuming $|A|\ge 3$. This also
provides a description of $\Sigma_i$ in the pure logic: It follows that in the
$\Sigma_i$-fragment over $\FOpure{A}$, one can define exactly the relations in
$\Sigma_i^0$ that are invariant under automorphisms of $\FOpure{A}$, if
$|A|\ge 3$.

Since \cite{HalfonSZ17} shows that all recursively enumerable languages over one letter are definable in $\FOconst{A}$ if $|A|\ge 2$, it would suffice to define a bijection between $a^*$ and $A^*$ using subwords. 
However, since this seems hard to do directly, our proof follows a different route. 
We first show how to define rational transductions and then a special language  from which one can build every recursively enumerable relation via rational transductions and intersections. 
In particular, a byproduct is a direct proof of undecidability of the existential fragment in the case of $|A|\ge 3$ that avoids using undecidability of Diophantine equations\footnote{Our proof relies on the definability of concatenation and certain counting predicates (see \autoref{sec:proof}), which was shown directly in~\cite{HalfonSZ17}, without using computational completeness of Diophantine equations.}.

\subsection*{Key ingredients} 
The undecidability proof for the existential fragment from~\cite{HalfonSZ17}
shows that the relations $\mathsf{ADD}$ and $\mathsf{MULT}$ are definable,
in addition to auxiliary predicates that are needed for this, such as
concatenation and letter counting predicates of the form ``$|u|_a=|v|_b$''.
With these methods, it is difficult to express that a certain property holds
locally---by which we mean: at every position in a word.  Using concatenation,
we can define languages like~$(a^nb)^*$ for each $n\in\N$ (see \autoref{sec:proof}), which ``locally
look like $a^nb$''. But if we want to express that, e.g., $aba$ does not occur as an
infix, this is of little help, because words avoiding an infix need not be
periodic. The ability to disallow infixes would aid us in defining rational transductions via runs of
transducers, as these are little more than configuration sequences where pairs of configurations
that are not connected by a transition do not occur as infixes.
Such local properties are often easy to state with universal
quantification, but this is not available in existential formulas.

An important theme in our proof is to express such local properties
by carefully constructing long words in which $w$ has to embed in order for $w$
to have the local property. For example, our first lemma
says: Each set $X\subseteq A^{=\ell}$ can be characterized as the set of
words (of length $\ge \ell$) that embed into each word in a finite set
$P$. This allows us to define sets $X^*$.

Steps~I--III of our proof use techniques of this type to express rational transductions. In Step~IV, we then define the special language
$G=\{a^nb^n\mid n\ge 0\}^*$, which has the property that all recursively
enumerable languages can be obtained from $G$ using rational transductions and
intersection. This yields all recursively enumerable relations
over two letters in Step~V.

In sum, Steps~I--V let us define all recursively enumerable relations over
$\{a,b\}$, provided that the alphabet $A$ contains an additional auxiliary
letter. It then remains to define recursively enumerable relations that can
also involve all other letters in $A$. We do this in Step~VI by observing that
each word $w\in A^*$ is determined by its projections to binary alphabets
$B\subseteq A$.  This allows us to compare words by looking at two letters at a
time and use the other (currently unused) letters for auxiliary means.

A conference version of this paper appeared in~\cite{BaumannGTZ22}.

\section{Main results}
\label{sec:results}
We say that $u$ is a {\em subword} of $v$, written $u \subword v$,
if there exist words $u_1, \dots, u_n$ and $v_0, \dots, v_n$ such that
$u = u_1 \cdots u_n$ and $v = v_0 u_1 v_1 \cdots u_n v_n$.

\subsection*{Subword logic}
We consider first-order logic over the structure $\FOpure{A}$,
first-order logic over the structure $\FOsomeconst{A}{w_1,\ldots,w_n}$
enriched with finitely many constant symbols $w_1, \ldots, w_n \in A^*$,
and first-order logic over the structure $\FOconst{A}$
enriched with constant symbols $w$ for every word $w \in A^*$.
A first-order formula $\varphi$ with free variables $x_1, \dots, x_k$ {\em defines} a relation $R \subseteq (A^*)^k$ if $R$ contains exactly those tuples of words $(v_1, \dots, v_k)$ that satisfy\footnote{The correspondence between the entries in the tuple and the free variables of $\varphi$ will always be clear, because the variables will have an obvious linear order by sorting them alphabetically and by their index. For example, if $\varphi$ has free variables $x_i$ for $1\le i\le k$ and $y_j$ for $1\le j\le\ell$, then we order them as $x_1,\ldots,x_k,y_1,\ldots,y_\ell$.} the formula $\varphi$.

Let us define the quantifier alternation fragments of first-order logic.
A formula without quantifiers is called \emph{$\Sigma_0$-formula} or \emph{$\Pi_0$-formula}.
For $i\ge 1$, a $\Sigma_i$-formula (resp.\ $\Pi_i$-formula) is one of the form $\exists x_1\cdots \exists x_n \varphi$ (resp.\ $\forall x_1\cdots \forall x_n\varphi$), where $\varphi$ is a a $\Pi_{i-1}$-formula (resp.\ $\Sigma_{i-1}$-formula), $x_1,\ldots,x_n$ are variables, and $n\ge 0$.
In other words, a $\Sigma_i$-formula is in prenex form and its quantifiers begin with a block of existential quantifiers and alternate at most $i-1$ times between universal and existential quantifiers. The \emph{$\Sigma_i$-fragment} (\emph{$\Pi_i$-fragment}) consists of the $\Sigma_i$-formulas ($\Pi_i$-formulas).
In particular, the $\Sigma_1$-fragment (called the \emph{existential fragment}) consists of the formulas in prenex form that only contain existential quantifiers.

\subsection*{Expressiveness with constants}
Our main technical contribution is the following.
\begin{thm}\label{thm:main}
	Let $A$ be an alphabet with $|A|\ge 3$. A relation is definable in the $\Sigma_1$-fragment over $\FOconst{A}$ if and only if it is recursively enumerable.
\end{thm}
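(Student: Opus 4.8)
The plan is to prove both directions separately. The easy direction is that every existentially definable relation is recursively enumerable: given a $\Sigma_1$-formula $\varphi$ with constants over $\FOconst{A}$, one can semidecide whether a tuple $(v_1,\dots,v_k)$ satisfies $\varphi$ by enumerating all possible values for the existentially quantified variables and checking, for each finite assignment, whether the quantifier-free matrix holds; since the subword relation and the constants are decidable, this gives an enumeration procedure. So the whole content lies in the converse: every recursively enumerable relation is $\Sigma_1$-definable, and in fact (following the observation attributed to \cite{HalfonSZ17}) definable using a single well-chosen constant $W\in A^*$.

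\medskip

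For the hard direction I would follow the six-step architecture announced in the introduction. First I would establish the combinatorial toolkit: the ``first lemma'' stating that each $X\subseteq A^{=\ell}$ is exactly the set of sufficiently long words that embed into every element of some finite set $P$, and use it to define languages of the form $X^*$ existentially. The key recurring trick is to encode a \emph{universal} (local) condition---``every position/window of $w$ satisfies $\Phi$''---as an \emph{existential} embedding condition: one constructs a long word $P$ (or finite set of such words) so that $w$ embeds into $P$ iff $w$ avoids the forbidden infixes, thereby trading the unavailable universal quantifier for a single existential witness against a fixed constant. Using this I would carry out Steps I--III to define the graphs of rational transductions $T\subseteq A^*\times A^*$: a transduction is realized by a finite-state transducer, a run is a configuration sequence, and the legality of a run is a local (infix-avoidance) property, which is precisely what the embedding technique can capture. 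Concatenation and the letter-counting predicates such as $|u|_a=|v|_b$, already shown existentially definable in \cite{HalfonSZ17}, serve as building blocks throughout.

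\medskip

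Having transductions, Step~IV defines the single anchor language $G=\{a^nb^n\mid n\ge 0\}^*$, which is nonregular but existentially definable, and has the crucial closure property that every recursively enumerable language over $\{a,b\}$ arises from $G$ by applying a rational transduction and intersecting. Combining this with the transduction machinery (Step~V) yields all recursively enumerable relations over the two-letter subalphabet $\{a,b\}$, at the cost of needing one extra auxiliary letter in $A$---this is exactly where the hypothesis $|A|\ge 3$ is consumed. Finally Step~VI lifts to the full alphabet by the observation that every $w\in A^*$ is determined by its projections onto the binary subalphabets $B\subseteq A$: one compares words ``two letters at a time,'' freeing up the remaining letters as scratch space for the auxiliary constructions of the earlier steps.

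\medskip

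I expect the main obstacle to be Steps I--III, namely faithfully encoding arbitrary rational transductions via existential subword formulas. The difficulty is intrinsic: a valid transducer run is characterized by \emph{universally} forbidding illegal configuration-pairs as infixes, yet only existential quantification is permitted, so each such local constraint must be converted into an embedding into a cleverly engineered constant word, and these embeddings must be simultaneously enforceable without interfering with one another. Ensuring that a single witness word $P$ can rule out \emph{all} forbidden patterns at once---while still allowing every legal run---is the delicate combinatorial heart of the argument; once transductions are in hand, Steps IV--VI are comparatively routine applications of the closure properties of the recursively enumerable sets and of projection.
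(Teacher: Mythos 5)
Your proposal is correct and takes essentially the same route as the paper: the easy direction by enumerating witnesses, and the hard direction via exactly the paper's six-step plan---characterizing each $X\subseteq A^{=\ell}$ as the sufficiently long words embedding into a finite set $P$ (the paper's \autoref{lem:x-rep}), defining Kleene stars, then blockwise and arbitrary rational transductions, then the generator $G=\{a^nb^n\mid n\ge 0\}^*$ with its Hartmanis--Hopcroft-style property $L=\alpha(T_1G_{\#}\cap T_2G_{\#})$, and finally lifting to all of $A$ through binary projections, consuming $|A|\ge 3$ where you say it is consumed. You correctly flag the run-encoding of Steps~I--III as the crux, which the paper resolves with transition codes $a^jb^{n+1-j}$ and overlapping $\Delta^*$-decompositions rather than a single universal witness word, but your sketch anticipates this mechanism accurately.
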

\noindent We prove \autoref{thm:main} in \autoref{sec:proof}.
\autoref{thm:main} in particular yields a description of what is expressible using
$\Sigma_i$-formulas for each $i\ge 1$. Recall that the \emph{arithmetical
hierarchy} consists of classes $\Sigma_1^0,\Sigma_2^0,\ldots$, where
$\Sigma_1^0 = \RE$ is the class of recursively enumerable relations, and for $i\ge
2$, we have $\Sigma_i^0=\RE^{\Sigma_{i-1}^0}$. Here, for a class of relations
$\mathcal{C}$, $\RE^{\mathcal{C}}$ denotes the class of relations recognized by
oracle Turing machines with access to oracles over the class $\mathcal{C}$.
\begin{restatable}{cor}{levelByLevelConstants}\label{level-by-level-constants}
Let $A$ be an alphabet with $|A|\ge 3$ and let $i\ge 1$. A relation is definable in the $\Sigma_i$-fragment over $\FOconst{A}$ if and only if it belongs to $\Sigma_i^0$.
\end{restatable}

\noindent By \cite[Theorem~3.5]{HalfonSZ17arxiv} the undecidability of the $\Sigma_1$-fragment already holds for
$\FOsomeconst{A}{W}$ where $W \in A^*$ is a sufficiently complex constant.
Using the same ideas we show that the characterizations from \autoref{thm:main} and \autoref{level-by-level-constants}
also already hold for a single constant,
which will be proven in \autoref{sec:single-constant}.

\begin{rem}\label{rem:single-constant}
	Let $|A| \ge 3$. There exists a word  $W \in A^*$ so that \autoref{thm:main} and \autoref{level-by-level-constants}
	still hold for $\FOsomeconst{A}{W}$ instead of $\FOconst{A}$.
\end{rem}

\subsection*{Expressiveness of the pure logic} 
\autoref{level-by-level-constants} completely describes the relations definable in
the structure $\FOconst{A}$ if $|A|\ge 3$.  We can use this to derive a
description of the relations definable without constants, i.e., in the structure
$\FOpure{A}$.  The lack of constants slightly reduces the expressiveness; to
make this precise, we need some terminology. An \emph{automorphism (of
	$\FOpure{A}$)} is a bijection $\alpha\colon A^*\to A^*$ such that
	$u\subword v$ if and only if $\alpha(u)\subword\alpha(v)$.  A relation
	$R\subseteq (A^*)^k$ is \emph{automorphism-invariant} if for every
	automorphism $\alpha$, we have $(v_1,\ldots,v_k)\in R$ if and only if
	$(\alpha(v_1),\ldots,\alpha(v_k)) \in R$.  It is straightforward to check
	that every formula over $\FOpure{A}$ defines an automorphism-invariant
	relation.  Thus, in the $\Sigma_i$-fragment over $\FOpure{A}$, we can
	only define automorphism-invariant relations inside $\Sigma_i^0$.

\begin{restatable}{cor}{levelByLevelPure}\label{level-by-level-pure}
Let $A$ be an alphabet with $|A|\ge 3$ and let $i\ge 2$. A relation is definable in the $\Sigma_i$-fragment
over $\FOpure{A}$ if and only if it is automorphism-invariant and
belongs to $\Sigma_i^0$.
\end{restatable}

\noindent To give some intuition on automorphism-invariant sets, let us recall the
classification of automorphisms of $\FOpure{A}$,
shown implicitly by Kudinov, Selivanov, and Yartseva in
\cite{kudinov2010definability} (for a short and explicit proof,
see~\cite[Lemma~3.8]{HalfonSZ17arxiv}):
A map $\alpha\colon A^*\to A^*$ is an automorphism of $\FOpure{A}$ if and only if
(i) the restriction of $\alpha$ to $A$ is a permutation of $A$, and
(ii) $\alpha$ is either a \emph{word morphism}, i.e., $\alpha(a_1\cdots a_k)=\alpha(a_1)\cdots\alpha(a_k)$ for any $a_1,\ldots,a_k\in A$,
or a \emph{word anti-morphism}, i.e., $\alpha(a_1\cdots a_k)=\alpha(a_k)\cdots\alpha(a_1)$ for any $a_1,\ldots,a_k\in A$.

Finally, \autoref{level-by-level-pure} raises the question of whether the
$\Sigma_1$-fragment over $\FOpure{A}$ also expresses exactly the
automorphism-invariant recursively enumerable relations. It does not:
\begin{obs}\label{sigma1-pure}
Let $|A|\ge 2$. There are undecidable binary relations definable in the $\Sigma_1$-fragment
over $\FOpure{A}$. However, not every automorphism-invariant regular
language is definable in it.
\end{obs}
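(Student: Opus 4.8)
The plan is to prove the two assertions separately: the undecidable relation comes from known undecidability with constants, repackaged so that no constant symbols survive, while non-definability of the regular language follows from a closure property that every existential formula of the pure logic must respect.

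For the first claim I would start from the undecidability of the existential fragment with a single constant \cite[Theorem~3.5]{HalfonSZ17arxiv}, which already holds for $|A|\ge 2$. Inspecting that reduction, one finds a fixed quantifier-free formula $\psi(\bar z,y,x)$ over $\FOpure{A}$, a sufficiently complex word $W\in A^*$, and a computable encoding $I\mapsto c_I\in A^*$ of the instances $I$ of some undecidable problem $\mathcal P$, such that $I\in\mathcal P$ if and only if $(A^*,{\subword})\models\exists\bar z\,\psi(\bar z,W,c_I)$. The point is that, once $W$ and $c_I$ are supplied as arguments rather than as constants, this is a genuinely \emph{pure} existential formula. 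I would therefore define the binary relation $R=\{(y,x)\in(A^*)^2 : \exists\bar z\,\psi(\bar z,y,x)\}$, which is definable in the $\Sigma_1$-fragment over $\FOpure{A}$ and hence automorphism-invariant. Its slice at $y=W$ is exactly $\{c_I : I\in\mathcal P\}$, which is undecidable, and since $I\mapsto(W,c_I)$ is computable, $R$ itself is undecidable. The delicate point here is to check that the reduction of \cite{HalfonSZ17} can be put into this uniform shape, with a fixed matrix $\psi$ and the instance fed as a single word $c_I$, so that one free variable suffices and the resulting relation is genuinely binary.

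For the second claim I would isolate a closure property forced by existential definability and exhibit an automorphism-invariant regular language violating it. Let $\sigma\colon A^*\to A^*$ be the morphism with $\sigma(c)=cc$ for every letter $c\in A$. The crucial lemma is that $\sigma$ is an \emph{order-embedding}: $u\subword v$ if and only if $\sigma(u)\subword\sigma(v)$. The forward direction is immediate, since morphisms are monotone for $\subword$; the backward direction follows from a short greedy/inductive argument, matching each two-letter block $\sigma(u_k)$ to the leftmost block of $\sigma(v)$ carrying the same letter. Granting this, I claim that every language $L$ defined by a pure existential formula $\varphi(x)=\exists z_1\cdots z_m\,\psi$ satisfies $\sigma(L)\subseteq L$: if $\bar c$ witnesses $\varphi(u)$, then $\sigma(\bar c)$ witnesses $\varphi(\sigma(u))$, because an order-embedding preserves the truth value of every atom $s\subword t$, positive and negative alike, so the quantifier-free matrix $\psi$ evaluates identically. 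Finally I would take $L=\bigcup_{c,d\in A,\ c\neq d}(cd)^*$, the set of words strictly alternating between two distinct letters; this is regular and automorphism-invariant (it is closed under every permutation of $A$ and under reversal). Since $cd\in L$ but $\sigma(cd)=ccdd\notin L$, the language $L$ is not closed under $\sigma$ and therefore not definable in the $\Sigma_1$-fragment over $\FOpure{A}$.

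I expect the main obstacle to lie in the second part, specifically in the order-embedding lemma for $\sigma$: while the forward inclusion is trivial, \emph{reflecting} $\subword$ requires arguing that an embedding of $\sigma(u)$ into $\sigma(v)$ may be assumed to respect block boundaries, so that it descends to an embedding of $u$ into $v$. Once this lemma is established, the closure property and the separating example are routine and yield non-definability with no pumping machinery. For the first part, the only subtlety is extracting the uniform reduction from \cite{HalfonSZ17}; the passage from constants to free variables is exactly what makes the construction live in the pure logic and, at the same time, automatically forces automorphism-invariance.
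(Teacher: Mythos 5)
Your proposal is correct. On the first assertion you follow essentially the paper's route: the paper takes an undecidable recursively enumerable set $S\subseteq\N$, invokes \cite[Theorem~3.5]{HalfonSZ17arxiv} to obtain a single word $W\in A^*$ together with one fixed $\Sigma_1$-formula $\varphi(x)$ over $\FOsomeconst{A}{W}$ defining $L=\{a^n\mid n\in S\}$, and replaces every occurrence of $W$ by a fresh variable $y$; the resulting pure $\Sigma_1$-formula defines a binary relation whose slice at $y=W$ is $L$, hence undecidable. This also disposes of the one ``delicate point'' you flag: there is no need to extract a uniform reduction with a fixed matrix from the construction in \cite{HalfonSZ17arxiv} instance by instance, because the cited result already yields a single formula defining an undecidable \emph{language}; the instance is simply the word substituted for the free variable $x$.

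On the second assertion you take a genuinely different, and equally valid, route. The paper observes that $u\subword v$ iff $au\subword av$ (and symmetrically on the right), concludes that every $\Sigma_1$-definable language $L$ over $\FOpure{A}$ satisfies $A^*LA^*\subseteq L$, and uses the witness $\bigcup_{a\in A}a^*$; you instead prove closure under the doubling morphism $\sigma(c)=cc$ and use the witness $\bigcup_{c\neq d}(cd)^*$. Your key lemma is true, and the block-boundary issue you worry about can be settled in three lines without any greedy analysis: given an embedding of $\sigma(u)$ into $\sigma(v)$, let $\beta(i)$ be the block of $\sigma(v)$ receiving the image of the \emph{second} copy of $u_i$; then $v_{\beta(i)}=u_i$, and $\beta(i)<\beta(i+1)$, since otherwise the images of the second copy of $u_i$, the first copy of $u_{i+1}$, and the second copy of $u_{i+1}$ would be three distinct positions inside a single two-position block; hence $u\subword v$. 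Comparing the two approaches: yours is an instance of the reusable principle that a $\Sigma_1$-definable set over $\FOpure{A}$ is closed under any injection preserving and reflecting $\subword$ and thus generalizes beyond this one example, while the paper's argument is more elementary (no reflection lemma needed). Note also that your choice of witness is forced: the paper's witness $\bigcup_{a\in A}a^*$ is $\sigma$-closed (as $\sigma(a^n)=a^{2n}$), so your closure property could not exclude it, whereas the paper's property $A^*LA^*\subseteq L$ excludes both witnesses. Finally, a cosmetic point: $\bigcup_{c\neq d}(cd)^*$ contains only the even-length alternating words rather than all of them, but everything you actually use---regularity, automorphism-invariance (closure under letter permutations and under reversal, which maps $(cd)^n$ to $(dc)^n$), $cd\in L$, and $ccdd\notin L$---holds as stated.
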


\section{Existentially defining recursively enumerable relations}\label{sec:proof}
In this section, we prove \autoref{thm:main}. Therefore, we now concentrate on
definability in the $\Sigma_1$-fragment. Moreover, for an alphabet $A$, we will sometimes use the phrase
\emph{$\Sigma_1$-definable over $A$} as a shorthand for definability in the $\Sigma_1$-fragment over the structure $\FOconst{A}$.

\subsection*{Notation}
For an alphabet $A$, we write $A^{=k}$, $A^{\ge k}$, and $A^{\le k}$ for the
set of words over $A$ that have length exactly $k$, at least $k$, and at most
$k$, respectively.
We write $|w|$ for the length of a word $w$.
If $B \subseteq A$ is a subalphabet of $A$ then $|w|_B$ denotes the number of occurrences of letters $a \in B$ in $w$,
or simply $|w|_a$ if $B = \{a\}$ is a singleton.
Furthermore, we write $\pi_B \colon A^* \to B^*$ for the projection morphism
which keeps only the letters from $B$. 
If $B=\{a,b\}$, we also write $\pi_{a,b}$ for $\pi_{\{a,b\}}$.
The {\em downward closure} of a word $v \in A^*$ is defined as $v {\downarrow} :=  \{u \in A^* \mid u \subword v\}$.

\subsection*{Basic relations}

We will use two kinds of relations, concatenation and counting letters, which are shown to be
$\Sigma_1$-definable in $\FOconst{A}$ as part of the undecidability proof of
the truth problem in~\cite[Theorem III.3]{HalfonSZ17}. The following relations
are $\Sigma_1$-definable if $|A|\ge 2$.
\begin{description}
	\item[Concatenation] The relation $\{(u,v,w)\in (A^*)^3 \mid w = uv \}$.
	\item[Counting letters] The relation $\{(u,v)\in (A^*)^2 \mid |u|_a=|v|_b\}$ for any $a,b\in A$.
\end{description}
Moreover, we will make use of a classical fact from word combinatorics: For
$u,v\in A^*$, we have $uv=vu$ if and only if there is a word $r\in A^*$ with
$u\in r^*$ and $v\in r^*$~\cite{Berstel1979}. In particular, if $p$ is
\emph{primitive}, meaning that $p\in A^+$ and there is no $r\in A^*$ with
$|r|<|p|$ and $p\in r^*$, then $up=pu$ is equivalent to $u\in p^*$. 
Furthermore, note that by counting letters as above,
and using concatenation, we can also say $|u|_a = |vw|_a$,
i.e., $|u|_a=|v|_a+|w|_a$ for $a\in A$.
With these building blocks, we can state arbitrary linear equations over terms
$|u|_a$ with $u\in A^*$ and $a\in A$. For example, we can say
$|u|=3\cdot |v|_a+2\cdot |w|_b$ for $u,v,w\in A^*$ and $a,b\in A$.
This also allows us to state modulo constraints, such as
$\exists v\colon |u|_a = 2 \cdot |v|_a$, i.e., ``$|u|_a$ is even''.
Finally, counting letters lets us define projections: Note that for $B\subseteq
A$ and $u,v\in A^*$, we have $v=\pi_B(u)$ if and only if $v\subword u$ and
$|v|_b=|u|_b$ for each $b\in B$ as well as $\neg (a \subword v)$ for every $a \in A \setminus B$.

For any subalphabet $B \subseteq A$ one can clearly define $B^*$ over $A$.
Hence definability of a relation over $B$ also implies definability over the larger alphabet $A$.

\subsection*{Finite state transducers} An important ingredient of our proof
is to define regular languages in the subword order,
and, more generally, rational transductions,
i.e., relations recognized by finite state transducers.

For $k \in \nats$, a \emph{$k$-ary finite state transducer} 
$\cT = (Q,A,\delta,q_0,Q_f)$ consists of a finite set of \emph{states} $Q$,
an input alphabet $A$, an \emph{initial state} $q_0 \in Q$, a set of \emph{final states} 
$Q_f \subseteq Q$, and a \emph{transition relation} $\delta \subseteq Q \times (A \cup \{\varepsilon\})^k \times Q$. 
For a \emph{transition} $(q,a_1,\ldots,a_k,q') \in \delta$, we also write 
$q \xrightarrow{(a_1,\ldots,a_k)} q'$.

The transducer $\cT$ \emph{recognizes} the $k$-ary relation
$R(\cT) \subseteq (A^*)^k$
containing precisely those $k$-tuples $(w_1,\ldots,w_k)$, for which there is a transition sequence
$q_0 \xrightarrow{(a_{1,1},\ldots,a_{k,1})} q_1 \xrightarrow{(a_{1,2},\ldots,a_{k,2})} \ldots \xrightarrow{(a_{1,m},\ldots,a_{k,m})} q_m$
with $q_m \in Q_f$ and $w_i = a_{i,1}a_{i,2} \cdots a_{i,m}$ for
all $i \in \{1, \ldots, k\}$. Such a transition sequence is called an
\emph{accepting run} of $\cT$. We sometimes prefer to think of the $w_i$ as \emph{produced output} rather than \emph{consumed input} and thus occasionally use terminology accordingly. 
A relation $T$ is called a \emph{rational transduction} if it is recognized by some finite state transducer~$\cT$. Unary transducers (i.e., $k=1$) recognize the \emph{regular~languages}.

\subsection*{Overview}
As outlined in the introduction, our proof consists of six steps. In
Steps~I--III, we show that we can define all rational transductions $T\subseteq
(A^*)^k$ over the alphabet $B$, if $|B|\ge |A|+1$. In Step IV, we define the
special language $G=\{a^nb^n\mid n\ge 0\}^*$. From~$G$, all recursively
enumerable languages can be obtained using rational transductions and
intersection, which in Step~V allows us to define over $B$ all recursively
enumerable relations over $A$, provided that $|B|\ge |A|+1$. Finally, in
Step~VI, we use projections to binary alphabets to define arbitrary recursively
enumerable relations over $A$, if $|A|\ge 3$.

\subsection*{Step I: Defining Kleene stars}
We first define the languages $X^*$, where $X$ consists of words of
equal length. To this end, we establish an alternative representation for such sets.

\begin{exa}
Before proving the general statement,
let us illustrate how to define the language $\{ab,ba\}^*$ using an auxiliary symbol $\#$.
It suffices to define the language $\{ab\#,ba\#\}^*$ and then project to $\{a,b\}$.
The simple but key observation is that
\begin{equation}
	\label{abba-example-1}
	u \in \{ab,ba\} \iff u \subword bab \text{ and } u \subword aba \text{ and } |u| \ge 2.
\end{equation}
We claim that a word $w$ belongs to $\{ab\#,ba\#\}^*$ if and only if
\begin{equation}
	\label{abba-example-2}
	\exists n \in \N \colon w \subword (aba \#)^n \wedge w \subword (bab \#)^n \wedge |w|_\# = n \wedge |w| = 3n.
\end{equation}
The ``only if''-direction is immediate. For the ``if''-direction consider a word $w$ satisfying 
\eqref{abba-example-2}, i.e.\ $w = w_1 \# \cdots w_n \#$ where each $w_i$ belongs to $\{a,b\}^*$.
Then each word $w_i$ is a subword of $aba$ and $bab$.
Since $|w| = 3n$ either all words $w_i$ have length $2$ and therefore $w_i \in \{ab,ba\}$ by~\eqref{abba-example-1};
or, there exists some $w_i$ with $|w_i| > 2$.
But then again $w_i \in \{ab,ba\}$ by~\eqref{abba-example-1}, contradicting $|w_i| > 2$.
\end{exa}

\begin{lem}
	\label{lem:x-rep}
	Every nonempty set $X \subseteq A^{=\ell}$ can be written as $X = A^{\ge \ell} \cap \bigcap_{p \in P} p {\downarrow}$
	for some finite set $P \subseteq A^*$.
\end{lem}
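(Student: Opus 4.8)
The plan is to pin down $X$ as the set of \emph{long enough} words that simultaneously embed into a finite family of carefully chosen superwords, each tailored to exclude one ``bad'' word. Put $F = (A^{=\ell}\setminus X)\cup A^{=\ell+1}$; this is finite since $F\subseteq A^{\le \ell+1}$. The heart of the argument is the following separation statement, which I call $(\ast)$: for every $y\in F$ there is a word $p_y\in A^*$ with $x\subword p_y$ for all $x\in X$ (a \emph{common superword} of $X$) but $y\not\subword p_y$. Granting $(\ast)$, I set $P=\{p_y\mid y\in F\}$ and claim $X = A^{\ge\ell}\cap\bigcap_{p\in P}p{\downarrow}$.

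Both inclusions are short. For $\subseteq$, each $x\in X$ satisfies $|x|=\ell$ and, by $(\ast)$, $x\subword p_y$ for every $y$, so $x$ lies in the right-hand side. For $\supseteq$, take $w$ with $|w|\ge\ell$ and $w\subword p$ for all $p\in P$. If $|w|\ge\ell+1$, choose a subword $y\subword w$ with $|y|=\ell+1$; then $y\in F$, yet $y\subword w\subword p_y$ contradicts $y\not\subword p_y$, so in fact $|w|=\ell$. If moreover $w\notin X$, then $w\in F$ and $w\subword p_w$ again contradicts $w\not\subword p_w$; hence $w\in X$. Note that the hypothesis of $(\ast)$ is met: for $y\in F$ and $x\in X$ we have $|y|\ge\ell=|x|$, and $y\subword x$ would force $|y|=\ell$ and $y=x\in X$, which is impossible; thus $y\not\subword x$ for every $x\in X$.

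It remains to prove $(\ast)$, i.e.\ to construct, for words $x_1,\dots,x_k$ and a word $y$ with $y\not\subword x_i$ for all $i$, a common superword of the $x_i$ avoiding $y$; this is the main obstacle. I would first reduce to two words by induction on $k$: for $k=1$ take $x_1$ itself, and for the step, merge a common superword of $x_1,\dots,x_{k-1}$ avoiding $y$ (from the induction hypothesis) with $x_k$. The crux is therefore the two-word case: given $y\not\subword u$ and $y\not\subword v$, find a common superword of $u,v$ avoiding $y$. Here one must genuinely \emph{reuse} letters rather than shuffle: as the example above shows, every interleaving of $ab$ and $ba$ contains $aa$, while the common superword $bab$ avoids $aa$ by letting the two embeddings share the middle $a$. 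To produce such a word I would induct on $|y|$. Write $y=c\,y'$ with $c=y_1$, split $u$ at its first $c$ as $u=u_1\,c\,u_2$ (taking $u_2=\varepsilon$ and $u_1=u$ if $u$ avoids $c$), and likewise $v=v_1\,c\,v_2$. Since $y\subword u$ is equivalent to $y'\subword u_2$, the hypothesis gives $y'\not\subword u_2$ and $y'\not\subword v_2$; by induction there is a common superword $q$ of $u_2,v_2$ with $y'\not\subword q$. Taking any $c$-free common superword $\mu$ of $u_1,v_1$ (for instance $\mu=u_1v_1$), I would output $p=\mu\,c\,q$. Then $u_1\subword\mu$, the displayed $c$'s align, and $u_2\subword q$, so $u\subword p$, and symmetrically $v\subword p$; moreover the first $c$ in $p$ is the displayed one, after which only $q$ follows, so $y=c\,y'\subword p$ would force $y'\subword q$, a contradiction. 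The base case $|y|=1$, say $y=c$, is immediate: then $u,v$ are both $c$-free and $p=uv$ works. Applied to the running example this recipe even reconstructs $bab$ from $u=ab$, $v=ba$, $y=aa$, which is a reassuring sanity check.
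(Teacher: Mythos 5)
Your proof is correct, and it reaches the same overall skeleton as the paper's --- both exclude exactly the finite set $F=(A^{=\ell}\setminus X)\cup A^{=\ell+1}$ and both reduce the lemma to producing, for each $y\in F$, a common superword of $X$ that avoids $y$ --- but the way you produce these superwords is genuinely different. The paper uses a single explicit, $X$-independent construction: fixing a word $w$ listing each letter of $A$ once, it builds for $u=a_1\cdots a_k$ the word $p_u=(w\setminus a_1)(w\setminus a_1)a_1\,(w\setminus a_2)(w\setminus a_2)a_2\cdots(w\setminus a_k)(w\setminus a_k)$ and verifies directly that $p_u$ avoids $u$ while containing \emph{every} word of $A^{\le k}$ other than $u$; this stronger "universal" property makes the final intersection argument immediate and yields short witnesses, of length $O(|A|\cdot\ell)$. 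You instead prove a standalone separation lemma $(\ast)$ --- any finite family of words, none of which contains $y$ as a subword, admits a common superword avoiding $y$ --- by a double induction: on the family size via pairwise merging, and on $|y|$ via the canonical first-occurrence factorization $u=u_1cu_2$ with $u_1$ free of $c$, using the standard equivalence that $cy'\subword u$ iff $y'\subword u_2$. Your merging step $p=\mu c q$ is sound, including the degenerate cases where $c$ does not occur in $u$ or $v$ (there $u_2=\varepsilon$ and $y'\not\subword\varepsilon$ holds since $|y'|\ge 1$), and your argument that $y\not\subword p$ is the correct leftmost-embedding reasoning: since $\mu$ is $c$-free, any embedding of $cy'$ into $\mu cq$ forces $y'$ into a suffix of $q$, hence $y'\subword q$, a contradiction. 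The trade-off: your lemma is more general and reusable, needs no alphabet-permutation trick, and isolates the combinatorial core (that superwords must \emph{share} letters rather than shuffle, as your $bab$ example shows); the price is that your witnesses $p_y$ depend on $X$ and can be exponentially long in $\ell$ (up to roughly $\sum_{x\in X}|x|$ with $|X|$ as large as $|A|^{\ell}$), whereas the paper's $p_u$ are uniform and linear in $|A|\cdot\ell$ --- immaterial for the lemma as stated, though relevant if one tracks the size of the resulting $\Sigma_1$-formulas.
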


\begin{proof}
	We can assume $\ell \ge 1$
	since otherwise $X = \{\varepsilon\} = A^{\ge 0} \cap \varepsilon {\downarrow}$.
	Let $w \in A^*$ be any permutation of $A$ (i.e., each letter of $A$ appears exactly once in $w$).
	If $a \in A$, then $(w \setminus a)$ denotes the word obtained from $w$ by deleting $a$.
	For any nonempty word $u = a_1 \cdots a_k \in A^+$, $a_1,\ldots,a_k\in A$, define the word
	\[
		p_u = (w \setminus a_1) \, (w \setminus a_1) \, a_1 \, (w \setminus a_2) \, (w \setminus a_2) \, a_2 \cdots (w \setminus a_{k-1}) \, (w \setminus a_{k-1}) \, a_{k-1} \, (w \setminus a_k) \, (w \setminus a_k).
	\]
	Note that $p_u$ does not contain $u$ as a subword:
  In trying to embed each letter $a_i$ of $u$ into $p_u$, the first possible choice for $a_1$ comes after the initial sequence $(w \setminus a_1) \, (w \setminus a_1)$.
  Similarly, the next possible choice for each subsequent $a_i$ is right after $(w \setminus a_i) \, (w \setminus a_i)$.
  However, this only works until $a_{k-1}$, since there is no $a_k$ at the end of $p_u$.
  
	On the other hand, observe that $p_u$ contains every word $v \in A^{\le k} \setminus \{u\}$ as a subword:
	Suppose that $v = b_1 \cdots b_m$, $b_1,\ldots,b_m\in A$, and let $i \in [1,m+1]$ be the minimal position with $b_i \neq a_i$ or $i = m + 1$.
	The prefix $b_1 \cdots b_{i-1} = a_1 \cdots a_{i-1}$ occurs as a subword of $p_u$, which in the case $i = m + 1$ already is the whole word $v$. If $i \le m$ then $b_i$ occurs in $(w \setminus a_i)$, and $b_{i+1} \cdots b_m$ embeds into
	the subword $(w \setminus a_i) \, a_i \cdots (w \setminus a_{k-1}) \, a_{k-1}$ of $p_u$.
	Thus, we can write
	\[
		X ~ = A^{\ge \ell} ~\cap~ \bigcap_{u \in (A^{= \ell} \setminus X) \cup A^{= \ell+1}} p_u {\downarrow}.
	\]
  Here $u \in A^{= \ell+1}$ was added to also exclude all words of length greater than $\ell$.
\end{proof}

\begin{lem}
	\label{lem:x-star}
	Let $A \subseteq B$ be finite alphabets and $\# \in B \setminus A$.
	Let $X \subseteq A^{=k}$ and $Y \subseteq A^{=\ell}$ be sets.
	Then $(X \# Y \#)^*$ and $X^*$ are $\Sigma_1$-definable over $B$.
\end{lem}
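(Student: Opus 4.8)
The plan is to reduce the statement about $(X\#Y\#)^*$ and $X^*$ to the representation of sets of equal-length words provided by \autoref{lem:x-rep}, together with the basic $\Sigma_1$-definable building blocks (concatenation, letter counting, projections, and the combinatorial fact that $up=pu$ iff $u\in p^*$ for primitive $p$). First I would reduce $X^*$ to the $(X\#Y\#)^*$ case: since $X^*=(\pi_A\text{-projection of }(X\#X\#)^*)$ after intersecting with $A^*$, or more directly since the example already treats a single set, I would concentrate on defining $(X\#Y\#)^*$ and obtain $X^*$ by taking $Y=\{\varepsilon\}$ (adjusting lengths) or by projecting away $\#$. The guiding idea, generalizing the worked example for $\{ab,ba\}^*$, is that membership of a word $w$ in $(X\#Y\#)^*$ should be captured by a finite conjunction of subword constraints $w\subword p^n$ (for the words $p\in P$ from \autoref{lem:x-rep}, suitably decorated with $\#$) plus length and $\#$-counting constraints that force each block between consecutive $\#$'s to have exactly the right length.

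The key steps, in order, are as follows. First, apply \autoref{lem:x-rep} to write $X=A^{\ge k}\cap\bigcap_{p\in P}p{\downarrow}$ and $Y=A^{\ge\ell}\cap\bigcap_{q\in Q}q{\downarrow}$ for finite sets $P,Q\subseteq A^*$. Second, build the ``template'' words over $B$: for each $p\in P$ form the word $p\,\#$ and similarly $q\,\#$, and consider periodic words like $(p\#q\#)$ whose $n$-fold powers $w$ must embed into. Third, assert the existence of an $n\in\N$ and write the defining formula as a conjunction stating that $w$ embeds into each $(p\#q\#)^n$, that $|w|_\#=2n$, and that $|w|=(k+\ell+2)n$; these counting and length constraints are $\Sigma_1$-definable by the basic relations, and $n$ is introduced by an existential quantifier realized through an auxiliary word whose $\#$-count or length encodes $n$. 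Fourth, verify both directions: the ``only if'' direction is immediate since any element of $(X\#Y\#)^*$ embeds blockwise into the templates; for the ``if'' direction, the length equation forces every block between $\#$'s to have exactly length $k$ (resp.\ $\ell$), and then the subword constraints $w\subword\bigcap p^n{\downarrow}$ combined with $|\text{block}|\ge k$ force each block into $X$ (resp.\ $Y$) via the characterization from \autoref{lem:x-rep}, exactly mirroring the length-counting argument in the example.

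The main obstacle I anticipate is correctly handling the existential quantification over the repetition count $n$ and the blockwise length accounting simultaneously. In the example, $n$ appears inside constants $(aba\#)^n$, but in the lemma the period must be an arbitrary power, so I cannot simply use a fixed constant; instead I must existentially guess an auxiliary word $z$ that equals the appropriate power $(p\#q\#)^n$ and assert $w\subword z$. To pin $z$ down to a genuine power rather than an arbitrary subword I would use the primitivity fact: if $r=p\#q\#$ (or a primitive root thereof) then $z\in r^*$ is expressible as $zr=rz$, and then $|z|_\#=2n$ together with $w\subword z$ recovers the bound. The delicate point is ensuring that the interaction of \emph{several} period words $p\in P$ (which have different lengths) is handled by a common $n$: one clean route is to pad each $p$ to a uniform block length by concatenating with a fixed filler so that all templates share the period length $k+\ell+2$, after which a single existentially-guessed power $z\in r^*$ with $|z|_\#=2n$ governs all the embedding constraints at once. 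Getting these padding and primitivity details to line up so that the ``if'' direction's length argument still forces each block into the right set is where the real care is needed; the rest is routine assembly of already-$\Sigma_1$-definable predicates.
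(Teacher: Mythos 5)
Your proposal is correct and follows essentially the same route as the paper's proof: apply \autoref{lem:x-rep} to $X$ and $Y$, characterize $w \in (X\#Y\#)^*$ by the constraints $|w|_\# = 2n$, $|w|_A = (k+\ell)\cdot n$, and $w \subword (p\#q\#)^n$ for all $p \in P$, $q \in Q$, realize each template power by an existentially quantified word $u$ with $u(p\#q\#) = (p\#q\#)u$ (exploiting primitivity, with an evenness side condition when $p = q$) and $|u|_\# = |w|_\#$, and recover $X^*$ by setting $Y = \{\varepsilon\}$ and projecting to $A$, with the ``if''-direction verified exactly as in the paper (each block embeds into every template word, so a block strictly longer than $k$ would lie in $X \subseteq A^{=k}$, a contradiction, after which the total length forces all blocks to have exact lengths $k$ and $\ell$). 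Two wobbles, neither fatal: your alternative $\pi_A((X\#X\#)^*) = (XX)^*$ is not $X^*$ (the $Y = \{\varepsilon\}$ route you also give is the correct one), and the padding construction in your final paragraph addresses a non-issue --- since $n$ is already determined by $|w|_\#$, giving each pair $(p,q)$ its \emph{own} auxiliary word tied to $|w|_\#$ enforces a common $n$ across all templates, whereas a single $z$ that is simultaneously a power of several distinct words $p\#q\#$ cannot exist when $|P \times Q| > 1$.
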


\begin{proof}
We can clearly assume that $X,Y$ are nonempty.
By \autoref{lem:x-rep} we can write $X = A^{\ge k} \cap \bigcap_{p \in P} p {\downarrow}$
and $Y = A^{\ge \ell} \cap \bigcap_{q \in Q} q {\downarrow}$
for some finite sets $P,Q \subseteq A^*$.
We claim that $w \in (A \cup \{\#\})^*$ belongs to $(X \# Y \#)^*$ if and only if 
\begin{equation}
	\label{eq:x-sharp-formula}
	\exists n \in \mathbb{N} \colon |w|_\# = 2n \wedge |w|_A = (k+\ell) \cdot n \wedge \bigwedge_{p \in P, q \in Q} w \subword (p \# q \#)^n.
\end{equation}
Observe that the number $n$ is uniquely determined by $|w|_\#$.
The ``only if''-direction is clear. Conversely, suppose that $w \in (A \cup \{\#\})^*$ satisfies the formula.
We can factorize $w = x_1 \# y_1 \# \dots x_n \# y_n \#$
where each $x_i$ is a subword of each word $p \in P$,
and each $y_i$ is a subword of each word $q \in Q$.
If some word $x_i$ were strictly longer than $k$, then it would belong to $X$ by the representation of $X$,
and in particular would have length $k$, contradiction.
Therefore each word $x_i$ has length at most $k$, and similarly each word $y_i$ has length at most~$\ell$.
However, since the total length of $x_1 y_1 \dots x_n y_n$ is $(k+\ell) \cdot n$,
we must have $|x_i| = k$ and $|y_i| = \ell$, and hence $x_i \in X$ and $y_i \in Y$ for all $i \in [1,n]$. This proves our claim.

Finally, \eqref{eq:x-sharp-formula} is equivalent to the following $\Sigma_1$-formula:
\[
	(k + \ell) \cdot |w|_\# = 2 \cdot |w|_A ~\wedge~ \bigwedge_{p \in P, q \in Q} \exists u \in (p \# q \#)^* \colon (w \subword u ~\wedge~ |u|_\# = |w|_\#)
\]
Here, we express $u\in (p\#q\#)^*$ as follows. If $p\ne q$, then $p\#q\#$ is primitive and $u\in (p\#q\#)^*$ is equivalent to $u(p\#q\#)=(p\#q\#)u$. If $p=q$, then $u\in (p\#q\#)^*$ is equivalent to $up\#=p\#u$ and $|u|_{\#}$ being even. Finally, to define $X^*$ we set $Y = \{\varepsilon\}$ and obtain $X^* = \pi_A((X \# Y \#)^*)$.
\end{proof}

\subsection*{Step II: Blockwise transductions}
On our way towards rational transductions, we work with a subclass of transductions.
If $T \subseteq A^* \times A^*$ is any subset, then we define the relation 
\[ T^* = \{(x_1 \cdots x_n, \,y_1 \cdots y_n) \mid n\in\N,~(x_1,y_1), \dots, (x_n,y_n) \in T \}. \]
We call a transduction {\em blockwise} if it is of the form $T^*$
for some $T \subseteq A^{=k} \times A^{=\ell}$ and $k,\ell \in \N$.
\begin{lem}
	\label{lem:blockwise}
	Let $A \subseteq B$ be finite alphabets with $|B| \ge |A| + 1$.
	Every blockwise transduction $R \subseteq A^* \times A^*$ is $\Sigma_1$-definable over $B$.
\end{lem}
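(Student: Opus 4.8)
The plan is to guess a single witness word that records a sequence of $T$-pairs and then (a) force it to be a genuine \emph{paired} run and (b) read off the two output tracks. Fix a letter $\#\in B\setminus A$, which exists since $|B|\ge|A|+1$; crucially, only this \emph{one} auxiliary letter will be used. Write $R=T^*$ with $T\subseteq A^{=k}\times A^{=\ell}$, let $X\subseteq A^{=k}$ and $Y\subseteq A^{=\ell}$ be the sets of first and second components occurring in $T$, and set $\widetilde T=\{xy\mid (x,y)\in T\}\subseteq A^{=(k+\ell)}$. First I would existentially quantify a word $w$ and require $w\in(X\#Y\#)^*$; by \autoref{lem:x-star} this is $\Sigma_1$-definable over $B$ and pins $w$ down to the shape $w=x_1\#y_1\#\cdots x_n\#y_n\#$ with $x_i\in X$ and $y_i\in Y$, establishing the scaffold of single $\#$-separators but with no link yet between $x_i$ and $y_i$.

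To force the pairing $(x_i,y_i)\in T$, I would additionally require $\pi_A(w)\in\widetilde T^*$. The word $\pi_A(w)=x_1y_1x_2y_2\cdots x_ny_n$ is definable by projection, and $\widetilde T^*$ is $\Sigma_1$-definable by \autoref{lem:x-star}. Since every word of $\widetilde T$ has length $k+\ell$ and $|\pi_A(w)|=n(k+\ell)$, the factorization of $\pi_A(w)$ into blocks of length $k+\ell$ is unique and equals $x_1y_1,\ldots,x_ny_n$; hence $\pi_A(w)\in\widetilde T^*$ holds if and only if $(x_i,y_i)\in T$ for all $i$. This unique-factorization observation is what converts the mere \emph{product} structure of $(X\#Y\#)^*$ into the desired \emph{relation}.

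The main obstacle is to read off $u=x_1\cdots x_n$ and $v=y_1\cdots y_n$ from $w$ using only existential subword constraints, since this is a positional de-interleaving task for which weak conditions like $u\subword\pi_A(w)$ together with $u\in X^*$ and $|u|=kn$ do \emph{not} suffice (a length-$kn$ subword of $\pi_A(w)$ lying in $X^*$ may borrow letters from the $y$-factors). The key gadget I would use is a second guessed word $w'\in(X\#\#)^*$ subject to $w'\subword w$ and $|w'|_\#=|w|_\#$. Because $w$ and $w'$ carry the same number of $\#$'s while a subword embedding is order-preserving, injective, and maps $\#$ to $\#$, the occurrences of $\#$ are matched bijectively and in order; the double $\#$ following each block of $w'$ then consumes both the $\#$ after an $x$-factor and the $\#$ after the next $y$-factor of $w$, so the blocks of $w'$ line up exactly with the $x$-factors of $w$. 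As each block $x_i'$ and its target $x_i$ both have length $k$, the embedding forces $x_i'=x_i$, whence $\pi_A(w')=x_1\cdots x_n=u$. A symmetric witness $w''\in(\{\varepsilon\}\#Y\#)^*$ with $w''\subword w$ and $|w''|_\#=|w|_\#$ aligns its blocks with the $y$-factors and yields $\pi_A(w'')=v$. I expect this alignment-by-counting argument to be the part needing the most care, as it is precisely where positional information is recovered without universal quantification.

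Putting this together, I would define $R$ by the formula that existentially quantifies $w,w',w''$ and asserts $w\in(X\#Y\#)^*$, $\pi_A(w)\in\widetilde T^*$, $w'\in(X\#\#)^*$, $w''\in(\{\varepsilon\}\#Y\#)^*$, $w'\subword w$, $w''\subword w$, $|w'|_\#=|w|_\#=|w''|_\#$, $u=\pi_A(w')$, and $v=\pi_A(w'')$. The projections and the equalities among letter counts are $\Sigma_1$-definable by the basic relations, and the Kleene-star languages are $\Sigma_1$-definable by \autoref{lem:x-star}; since a conjunction of $\Sigma_1$-formulas is again $\Sigma_1$, the whole formula is $\Sigma_1$ over $B$ and uses only the single auxiliary letter $\#$, matching the hypothesis $|B|\ge|A|+1$. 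The forward direction is immediate: from a factorization witnessing $(u,v)\in T^*$ one builds $w$, $w'$, and $w''$ directly, so all conjuncts hold.
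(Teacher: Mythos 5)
Your proposal is correct and takes essentially the same approach as the paper's proof: you guess $w\in(X\#Y\#)^*$ and enforce the pairing via $\pi_A(w)\in\widetilde{T}^*$ (both via \autoref{lem:x-star}), then extract the tracks with witnesses $w'\in(X\#\#)^*$ and $w''\in(\#Y\#)^*$ satisfying $w',w''\subword w$ and $|w'|_\#=|w|_\#=|w''|_\#$, which is exactly the paper's construction (the paper uses $A^{=k}$ and $A^{=\ell}$ where you use the component sets of $T$, an inessential difference). Your explicit alignment-by-counting argument for why the matching $\#$-counts force the blocks of $w'$ and $w''$ to line up with the $x$- and $y$-factors is a correct justification of a step the paper leaves implicit.
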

\begin{proof}
Let $\# \in B \setminus A$ be a symbol.
Suppose that $R = T^*$ for some $T \subseteq A^{=k} \times A^{=\ell}$.
Define the language $L = \{x \# y \# \mid (x,y) \in T \}^*$.
Note that
\[
	w \in L \iff w \in (A^{=k} \# A^{=\ell} \#)^* ~\wedge~ \pi_A(w) \in \{ xy \mid (x, y) \in T \}^*,
\]
and hence $L$ is $\Sigma_1$-definable over $B$ by \autoref{lem:x-star}.
The languages $X = (A^{=k} \# \#)^*$ and $Y = (\# A^{=\ell} \#)^*$ are also definable over $B$ by \autoref{lem:x-star}.
Then $(x,y) \in R$ if and only if
\[
	\exists w \in L, \, \hat x \in X, \, \hat y \in Y \colon~ \hat x, \hat y \subword w ~\wedge~ |w|_\# = |\hat x|_\# = |\hat y|_\# ~\wedge~ x = \pi_A(\hat x) ~\wedge~ y = \pi_A(\hat y).\qedhere
\]
\end{proof}

\subsection*{Step III: Rational transductions}
We are ready to define arbitrary rational transductions.

\begin{lem}
	\label{lem:transductions}
	Let $A \subseteq B$ be finite alphabets where $|A| + 1 \le |B|$ and $|B|\ge 3$.
	Every rational transduction $T \subseteq (A^*)^k$ is $\Sigma_1$-definable over $B$.
\end{lem}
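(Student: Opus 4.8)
The plan is to reduce arbitrary $k$-ary rational transductions to the blockwise case handled by \autoref{lem:blockwise}. The key idea is that an accepting run of a $k$-ary transducer $\cT=(Q,A,\delta,q_0,Q_f)$ is essentially a sequence of transitions, and a correct sequence of transitions is characterized by \emph{local} consistency conditions: consecutive transitions must agree on the intermediate state, the first must start at $q_0$, and the last must end in $Q_f$. My strategy is to encode runs as words over a slightly larger alphabet, define the set of all valid run-encodings using the blockwise machinery, and then project out the output components.

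First I would fix fresh auxiliary symbols (we have room since $|B|\ge|A|+1$ and $|B|\ge 3$) to serve as separators and to encode states. I would represent each transition $t=(q,(a_1,\ldots,a_k),q')$ by a fixed-length block that records the source state $q$, the target state $q'$, and padded versions of the letters $a_1,\ldots,a_k\in A\cup\{\varepsilon\}$; padding each $a_i$ to a uniform length (say using a blank symbol for $\varepsilon$) ensures that all transition-blocks have the \emph{same length}, which is essential for the blockwise approach. A full run is then a concatenation of such equal-length blocks. The crucial point is that the ``consecutive transitions share a state'' condition is itself blockwise: I would define the set of length-two block sequences $t\,t'$ that are compatible (i.e.\ the target state recorded in $t$ matches the source state recorded in $t'$) as a finite subset of $A'^{=2m}$ for the appropriate block length $m$, and then take its Kleene-$*$-like closure via \autoref{lem:x-star}. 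Overlapping consistency of adjacent pairs throughout the run can be enforced by intersecting the constraint ``every even-aligned pair is compatible'' with ``every odd-aligned pair is compatible,'' each of which is of the form $(\text{blocks})^*$ and hence $\Sigma_1$-definable.

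With valid run-encodings defined, the relation between a run and the $k$ output words $w_1,\ldots,w_k$ is itself a blockwise transduction: from each transition-block one reads off the (padded) $i$-th output letter, and concatenating these across the run yields $w_i$. I would therefore use \autoref{lem:blockwise} to define, for each coordinate $i$, the blockwise map sending a run-encoding to the padded letter sequence it produces in coordinate $i$, and then apply $\pi_A$ (definable via counting, as noted in the preliminaries) to strip the padding and recover $w_i\in A^*$. Combining the run-validity predicate with these $k$ output-extraction predicates and existentially quantifying the run variable gives a $\Sigma_1$-formula defining $T$ over $B$. The boundary conditions on $q_0$ and $Q_f$ are handled by intersecting with the (definable) constraints that the first block starts at $q_0$ and the last block ends in a final state, using the separator symbol to locate the extremal blocks.

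The main obstacle I anticipate is the bookkeeping needed to enforce \emph{global} consistency of a run using only the \emph{local}, blockwise primitives available. A single Kleene-star of compatible pairs only constrains non-overlapping adjacent pairs, so I must combine two shifted copies of such constraints to cover every adjacent pair in the run, and I need the length and separator conventions to be rigid enough that these two alignments genuinely interlock rather than permitting spurious factorizations. This is exactly the kind of situation foreshadowed in the introduction, where disallowing a forbidden adjacency (here, an incompatible pair of consecutive transitions) plays the role of the forbidden-infix conditions that are hard to state existentially; the equal-length padding and the interlocking-pairs trick are what make it go through. A secondary technical point is ensuring enough auxiliary letters are available for the padding and state-encoding simultaneously, which is why the hypothesis is $|B|\ge|A|+1$ together with $|B|\ge 3$ rather than merely $|B|\ge|A|+1$.
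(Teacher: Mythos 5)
Your overall architecture coincides with the paper's: encode a run as a concatenation of equal-length transition codes, enforce the overlapping state-compatibility of adjacent codes by intersecting two shifted Kleene-star constraints, extract each coordinate's input by a blockwise transduction with padding for $\varepsilon$-moves, and existentially quantify the run. However, there is a genuine gap in your alphabet accounting. The hypotheses guarantee only \emph{one} letter of $B$ beyond $A$ (e.g.\ $A=\{x,y\}$, $B=\{x,y,z\}$; the extra condition $|B|\ge 3$ only helps when $|A|\le 1$), yet your encoding spends several fresh symbols: separators, state codes, and a blank. Worse, \autoref{lem:blockwise} and \autoref{lem:x-star} each themselves consume a letter outside the block alphabet (they need some $\#\in B$ not occurring in the blocks), so your run-encoding alphabet must fit inside $|B|-1$ letters. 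In the tight case this is violated at least twice: blocks containing literal state symbols do not fit, and your final extraction step---a blockwise map onto padded sequences over $A\cup\{\mathrm{blank}\}$ followed by $\pi_A$---would require $|B|\ge |A|+2$ to invoke \autoref{lem:blockwise}, which is not available. Your closing remark that the hypotheses leave ``enough auxiliary letters'' for padding and state-encoding simultaneously is exactly where the argument breaks.

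The missing idea, which is the crux of the paper's proof, is to squeeze the entire encoding into \emph{two} letters $a,b$ (taken inside $A$ when possible, with $|B|\ge 3$ covering small $A$): transition $j$ among $n$ is coded as $a^jb^{n+1-j}$, the input letter $a_j$ as $ab^jab^{m-j}a$, and $\varepsilon$ as $b^{m+3}$---no state symbols, no blank, and no separator beyond the single letter that \autoref{lem:x-star}/\autoref{lem:blockwise} need internally. This forces one further wrinkle your proposal sidesteps incorrectly: once blanks are coded over $\{a,b\}\subseteq A$, they cannot be stripped by $\pi_A$, and a blockwise transduction cannot delete them either (all output blocks must share one fixed length, so selective deletion is impossible). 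The paper instead removes the $\varepsilon$-blocks by a subword-plus-counting trick: the cleaned word $v$ is characterized by $v\in Y^*$, $v\subword u$, and $|v|_a=|u|_a$, where $Y$ is the set of letter codes. Two smaller points: your ``two shifted $(\cdot)^*$ constraints'' need flanking single blocks and a parity case split (the paper uses $\Delta^*\cap X\Delta^*X$ versus $X\Delta^*\cap\Delta^*X$, with $X$ the transition codes and $\Delta$ the compatible pairs), and the extremal conditions are handled via the definable prefix/suffix relations rather than by locating separators. With the two-letter coding substituted for your fresh-symbol coding, your proof becomes essentially the paper's.
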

\begin{proof}
	\newcommand{\prun}{\mathsf{run}}
	\newcommand{\pinput}{\mathsf{input}}
	Let $a,b\in B$. Let us first give an overview. Suppose the transducer
	for $T$ has $n$ transitions. Of course, we may assume that every run contains at least one transition. The idea is that a sequence of transitions
	is encoded by a word, where transition $j\in\{1,\ldots,n\}$ is
	represented by $a^jb^{n+1-j}$. We will define predicates $\prun$ and $\pinput_i$ for $i\in\{1,\ldots,k\}$ with
	\[(w_1,\ldots,w_k)\in T\iff \exists w\in\{a,b\}^*\colon ~\prun(w) ~\wedge~ \bigwedge_{i=1}^k \pinput_i(w,w_i). \]
	Here, $\prun(w)$ states that $w$ encodes a
	sequence of transitions that is a run of the transducer.  Moreover, $\pinput_i(w,w_i)$ states that $w_i\in A^*$ is the input of this run in the $i$-th coordinate.

	We begin with the predicate $\prun$. Let us call the words in $X=\{a^jb^{n+1-j}\mid
	j\in\{1,\ldots,n\}\}$ the \emph{transition codes}.
	Let $\Delta$ be the set of all words $a^ib^{n+1-i} a^jb^{n+1-j}$ for which the target state of transition $i$ and
	the source state of transition $j$ are the same.
	Note that a word $w\in X^*$ represents a run if
	\begin{enumerate}
		\item $w$ begins with a transition that can be applied in an initial state,
		\item $w$ ends with a transition that leads to a final state, and
		\item either $w \in \Delta^* \cap X \Delta^* X$ or $w \in X\Delta^* \cap \Delta^*X$, depending on whether the run has an even or an odd number of transitions.
	\end{enumerate}
	Thus, we can define $\prun(w)$ using prefix and suffix relations and
	membership to sets~$\Delta^*$. The prefix and suffix relation can be defined over $\{a,b\}$
	using concatenation, see also~\cite[Theorem~III.3, step 14]{HalfonSZ17}.
	Finally, we can express $w\in X^*$, $w \in \Delta^*$ and similar with \autoref{lem:x-star}.

	It remains to define the $\pinput_i$ predicate.
	In the case that every transition reads a single letter on each input (i.e., no $\varepsilon$ input),
	we can simply replace each transition code in $w$ by its $i$-th input letter
	using a blockwise transduction.
	To handle $\varepsilon$ inputs, we define $\pinput_i$ in two steps.
	Fix $i$ and let
	$A=\{a_1,\ldots,a_m\}$.
	We first obtain an encoded version $u_i$ of
	the $i$-th input from $w$: For every transition that reads $a_j$, we
	replace its transition code with $ab^jab^{m-j}a$. Moreover, for each
	transition that reads $\varepsilon$, we replace the transition code by
	$b^{m+3}$. Using \autoref{lem:blockwise}, this replacement is easily achieved
	using a blockwise transduction. Hence, each possible input in
	$A\cup\{\varepsilon\}$ is encoded using a block from $Y\cup
	\{b^{m+3}\}$, where $Y=\{ab^jab^{m-j}a \mid j\in\{1,\ldots,m\}\}$.

	Suppose we have produced the encoded input $u_i\in (Y\cup \{b^{m+3}\})^*$. In the next
	step, we want to define the word $v_i\in Y^*$, which is obtained from
	$u_i$ by removing each block $b^{m+3}$ from $u_i$. We do this as follows:
	\[ v_i\in Y^* ~\wedge~ v_i\subword u_i ~\wedge~ |v_i|_a = |u_i|_a. \]
	Note that here, we can express $v_i\in Y^*$ because of \autoref{lem:x-star}.
	In the final step, we turn $v_i$ into the input $w_i\in A$ by
	replacing each block $ab^jab^{m-j}a$ with $a_j$ for
	$j\in\{1,\ldots,m\}$. This is just a blockwise transduction and can be
	defined by \autoref{lem:blockwise} because $|B|\ge |A|+1$.
\end{proof}

\begin{rem}
	We do not use this here, but
	\autoref{lem:transductions} also holds without
	the assumption $|B|\ge 3$. Indeed, if $|B|=2$,
	then this would imply $|A_i|=1$ for every $i$.
	Then we can write $A_i=\{a_i\}$ for (not
	necessarily distinct) letters
	$a_1,\ldots,a_k$.  Since $T$ is rational, the
	set of all $(x_1,\ldots,x_k)\in \N^k$ with
	$(a_1^{x_1},\ldots,a_k^{x_k})\in T$ is
	semilinear, and thus $\Sigma_1$-definable in
	$(\N,+,0)$.  It follows from
	the known predicates that $T$ is
	$\Sigma_1$-definable using subwords over
	$\{a_1,\ldots,a_k\}$.
\end{rem}
\subsection*{Step IV: Generator language} Our next ingredient is to express
a particular non-regular language $G$ (and its variant $G_{\#}$):
\begin{align*} G&=\{a^nb^n \mid n\ge 0\}^*, & G_{\#}&=\{a^nb^n\# \mid n\ge 0\}^*. \end{align*}
This will be useful because from $G$,
one can produce all recursively enumerable sets by way of rational
transductions and intersection. 

\begin{lem}\label{lem:replacement}
	The language $\{ab,\#\}^*$ is $\Sigma_1$-definable over $\{a,b,\#\}$.
\end{lem}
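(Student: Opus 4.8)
The plan is to reduce membership in $\{ab,\#\}^*$ to a single subword embedding into an explicitly constructed ``spread-out'' word, thereby sidestepping the need to forbid the infix $a\#$ directly. Write $t=|w|_\#$ and $m=|w|_a$. I would prove that a word $w\in\{a,b,\#\}^*$ lies in $\{ab,\#\}^*$ if and only if $|w|_a=|w|_b$ and $w\subword u$, where $u=\#^t(ab\#^t)^m$. The ``only if'' direction is routine: a genuine member $w=\#^{i_0}ab\#^{i_1}\cdots ab\#^{i_m}$ has $\sum_j i_j=t$, so each block $ab$ maps to the corresponding $ab$ of $u$ and the $i_j\le t$ copies of $\#$ in slot $j$ fit into the group $\#^t$ provided at that slot. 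The ``if'' direction is the heart of the argument: since $|w|_a=|w|_b=m$ and $u$ contains exactly $m$ letters $a$ and $m$ letters $b$ forming the subsequence $(ab)^m$, any embedding must send the $a$'s and $b$'s of $w$ onto \emph{all} of the $a$'s and $b$'s of $u$, in order. As the blocks $ab$ of $u$ are contiguous, there is no room \emph{between} the $j$-th $a$ and $j$-th $b$ of $u$; hence no $\#$ of $w$ can lie strictly between a matched pair, so every $a$ of $w$ is immediately followed by a $b$ and $w\in\{ab,\#\}^*$.

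The second ingredient is to define $u=\#^t(ab\#^t)^m$ existentially from $w$, which I would do with concatenation, letter counting, and the primitivity/commutation criterion recalled above. First I would guess $r$ with $|r|_a=|r|_b=0$ and $|r|_\#=|w|_\#$, forcing $r=\#^t$; then set $s=ab\cdot r$ by concatenation, so $s=ab\#^t$, which is primitive (its letter $a$ occurs once). The commutation criterion then gives $x\in s^*$ precisely when $xs=sx$, and adding $|x|_a=|w|_a$ pins down $x=s^m$. Finally $u=r\cdot x=\#^t(ab\#^t)^m$, and I assert $w\subword u$. All these are $\Sigma_1$-definable over $\{a,b,\#\}$ (concatenation and letter counting are available by assumption, word equality is $p\subword q\wedge q\subword p$, and $\subword$ is atomic), and no auxiliary fourth letter is consumed, which is exactly why this works with only three letters.

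I expect the forcing direction to be the main obstacle, and it hinges on two count constraints that must be stated carefully. The conjunct $|x|_a=|w|_a$ is needed so that $u$ has \emph{exactly} $m$ blocks: with extra blocks the embedding would no longer be forced to use every $a$ and $b$ of $u$, and a stray $\#$ could hide inside a block. The conjunct $|w|_a=|w|_b$ is likewise essential, as without it an unbalanced subword such as $aab\subword(ab)^2$ (realized e.g.\ by $aa\#b$) would satisfy the embedding yet fail to lie in $\{ab,\#\}^*$. I would also check the boundary role of the leading $\#^t$ in $u$, which supplies capacity for the $\#$'s preceding the first block (a trailing-only reservoir would wrongly forbid a leading $\#$), and confirm the degenerate cases $m=0$ (so $x=\varepsilon$, $u=\#^t$, $w=\#^t$) and $t=0$ (so $r=\varepsilon$, $u=(ab)^m$).
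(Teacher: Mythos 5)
Your proof is correct and takes essentially the same route as the paper's: both arguments embed the candidate word into a power of the primitive word $ab\#^t$ (with membership in the star expressed via the commutation criterion $xs=sx$), and use counting constraints so that the $a$'s and $b$'s of the candidate must map onto all the contiguous $ab$ blocks in order, leaving no room for a $\#$ inside a block. The only cosmetic difference is that you hard-wire the padding $t=|w|_\#$ and the block count via $|x|_a=|w|_a$ and add $|w|_a=|w|_b$, whereas the paper existentially guesses $v\in\#^*ab\#^*$ and $w\in v^*$ and instead imposes the projection equality $\pi_{a,b}(u)=\pi_{a,b}(w)$.
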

\begin{proof}
	Note that
	\[ u\in \{ab,\#\}^* \iff \exists v\in \#^*ab\#^*,~w\in v^* \colon ~u\subword w ~\wedge~ \pi_{a,b}(u)=\pi_{a,b}(w).\]
	Here, the language $\#^*ab\#^*$ can be defined using concatenation.
	Moreover, since
	every word in $\#^*ab\#^*$ is primitive, we express $w\in v^*$ by
	saying $vw=wv$.
\end{proof}

\begin{lem}
	\label{lem:generator}
	Let $\{a,b\}\subseteq A$ and $|A|\ge 3$. 
	The language $G$ is $\Sigma_1$-definable over $A$.
\end{lem}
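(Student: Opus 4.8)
The plan is to reduce the statement to defining the marked variant $G_\# = \{a^nb^n\# \mid n\ge 0\}^*$ and then recover $G$ by projection. Since $|A|\ge 3$, pick a third letter $\#\in A\setminus\{a,b\}$; because $\pi_{a,b}(G_\#)=G$ and projections are $\Sigma_1$-definable (see the Basic relations), it suffices to define $G_\#$ over $A$ (all ingredients below live in the subalphabet $\{a,b,\#\}$, so the case $|A|=3$ is covered). The difficulty is that membership in $G_\#$ is governed by two \emph{local} conditions on the $\#$-delimited blocks of $w$: each block must lie in $a^*b^*$ (\emph{sortedness}), and within each block the number of $a$'s must equal the number of $b$'s (\emph{balance}). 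Both are naturally universal statements, and the crux is to force them existentially.

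For sortedness I would guess a long witness into which $w$ must embed. Concretely, guess a word $v$ with $v\in a^*b^*\#$ and $|v|_a=|v|_b$, so that $v=a^Nb^N\#$ for some $N$; such $v$ is primitive, so $V\in v^*$ is expressible by $Vv=vV$, forcing $V=(a^Nb^N\#)^m$. Then require $w\subword V$ together with $|w|_\#=|V|_\#$. The point is that matching the number of $\#$'s makes every $\#$ of $V$ get used, so in any embedding the $j$-th $\#$ of $w$ maps to the $j$-th $\#$ of $V$; hence the $j$-th block of $w$ embeds into the $j$-th block $a^Nb^N$ of $V$, which forces each block of $w$ into $a^{\le N}b^{\le N}\subseteq a^*b^*$. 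Thus $w$ is sorted with exactly $m$ blocks, without ever defining the regular language $(a^*b^*\#)^*$ directly.

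For balance I would use \autoref{lem:replacement}: guess $u\in\{ab,\#\}^*$ and impose $\pi_{a,\#}(w)=\pi_{a,\#}(u)$ and $\pi_{b,\#}(w)=\pi_{b,\#}(u)$. Every word in $\{ab,\#\}^*$ contributes one $a$ and one $b$ per token, so the two projections of $u$ read $a^{s_1}\#a^{s_2}\#\cdots$ and $b^{s_1}\#b^{s_2}\#\cdots$ with the \emph{same} block counts $s_i$. Matching these against the projections of $w$ (whose $\#$'s already align blockwise) forces the $a$-count and the $b$-count of block $i$ of $w$ both to equal $s_i$, i.e.\ balance. Equality of projections is $\Sigma_1$: write $\pi_{a,\#}(w)=\pi_{a,\#}(u)$ as $\exists s\colon s=\pi_{a,\#}(w)\wedge s=\pi_{a,\#}(u)$, using that projection is definable and unique. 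Putting the two ingredients together yields
\[
 w\in G_\# \iff \exists v,V,u\colon\ v\in a^*b^*\# \,\wedge\, |v|_a=|v|_b \,\wedge\, V\in v^* \,\wedge\, w\subword V \,\wedge\, |w|_\#=|V|_\# \,\wedge\, u\in\{ab,\#\}^* \,\wedge\, \pi_{a,\#}(w)=\pi_{a,\#}(u) \,\wedge\, \pi_{b,\#}(w)=\pi_{b,\#}(u),
\]
since sortedness plus balance force every block of $w$ to equal some $a^{n_i}b^{n_i}$, and conversely any $w\in G_\#$ admits $V=(a^{N}b^{N}\#)^{m}$ with $N\ge\max_i n_i$ and $u=\prod_i (ab)^{n_i}\#$ as witnesses.

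Every conjunct above is $\Sigma_1$-definable over $A$: concatenation and letter counting give $v\in a^*b^*\#$ and $|v|_a=|v|_b$; the primitive-power trick gives $V\in v^*$; subword and counting give the embedding and the $\#$-count constraint; \autoref{lem:replacement} gives $u\in\{ab,\#\}^*$; and projections and their equalities are definable as above. Finally $G=\pi_{a,b}(G_\#)$ is $\Sigma_1$-definable, completing the argument. The main obstacle I anticipate is exactly the simultaneous enforcement of the two local properties; the resolution is the asymmetric use of two different existential witnesses — a power of a single primitive word to pin down sortedness by embedding, and a word of $\{ab,\#\}^*$ to transfer per-block balance through projection matching — each of which converts a would-be universal constraint into an existential one.
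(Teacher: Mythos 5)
Your proof is correct and takes essentially the same route as the paper's: both reduce to $G_\#$ via $\pi_{a,b}$, enforce sortedness by embedding $w$ into a power of a primitive word from $a^*b^*\#$ with matching $\#$-counts (the primitive-power trick $Vv=vV$), and enforce per-block balance by matching $\pi_{a,\#}$ and $\pi_{b,\#}$ against a witness in $\{ab,\#\}^*$ supplied by \autoref{lem:replacement}. The only (immaterial) difference is that your sortedness witness is forced to be balanced, $v=a^Nb^N\#$, whereas the paper uses an arbitrary $u\in a^*b^*\#$ to define $K=(a^*b^*\#)^*$.
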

\begin{proof}
	Suppose $\#\in A\setminus \{a,b\}$. Since $G=\pi_{a,b}(G_{\#})$, it suffices to define $G_{\#}$.
	We can define the language $a^*b^*\#$ as a concatenation of $a^*$, $b^*$, and $\#$.
	The next step is to define the language $K=(a^*b^*\#)^*$. To this end, notice that
	\[ w\in K \iff \exists u\in a^*b^*\#,~v\in u^*\colon~ w\subword v ~\wedge~ |w|_{\#}=|v|_{\#}. \]
	Here, since the words in $a^*b^*\#$ are primitive, we can express $v\in u^*$ by saying $vu=uv$. Thus, we can define $K$.
	Using $K$ and \autoref{lem:replacement}, we can define $G_{\#}$, since
	\[
		w \in G_{\#} \iff w \in K ~\wedge~ \exists v \in \{ab,\#\}^* \colon ~\pi_{a,\#}(w)=\pi_{a,\#}(v) ~\wedge~ \pi_{b,\#}(w)=\pi_{b,\#}(v).\qedhere
	\]
\end{proof}

\subsection*{Step V: Recursively enumerable relations over two letters}
We are now ready to define all recursively enumerable relations over two letters
in $\FOconst{A}$, provided that $|A|\ge 3$.
For two rational transductions $T\subseteq A^*\times B^*$ and $S \subseteq B^* \times C^*$, and a language $L\subseteq A^*$, we denote \emph{application} of $T$ to $L$ as $TL=\{v \in B^* \mid \exists u\in L\colon (u,v)\in T\} \subseteq B^*$, and
we denote \emph{composition} of $S$ and $T$ as $S \circ T = \{(u,w) \mid \exists v \in B^* \colon (u,v) \in T \wedge (v,w) \in S \} \subseteq A^* \times C^*$.
The latter is again a rational transduction (see e.g.\ \cite{Berstel1979}).

\begin{lem}[Hartmanis \& Hopcroft 1970]\label{hartmanis-hopcroft}
 Every recursively enumerable language $L$ can be written as $L=\alpha(T_1
 G_{\#}\cap T_2 G_{\#})$ with a morphism $\alpha$ and rational transductions $T_1,T_2$.
\end{lem}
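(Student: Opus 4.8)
The goal is to prove the Hartmanis--Hopcroft characterization: every recursively enumerable language $L$ can be written as $L = \alpha(T_1 G_{\#} \cap T_2 G_{\#})$, where $\alpha$ is a morphism, $T_1, T_2$ are rational transductions, and $G_{\#} = \{a^n b^n \# \mid n \ge 0\}^*$.

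\medskip

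The plan is to reduce $L$ to a computation trace of a Turing machine (or an equivalent unrestricted grammar) and then encode the two halves of the ``matching condition'' inherent in such traces using two copies of $G_{\#}$ intersected against each other. First I would fix a Turing machine $M$ accepting $L$ and consider the language of accepting computation histories of $M$ on input $x$, written as a sequence of configurations $C_0 \# C_1 \# \cdots \# C_m$ separated by $\#$. The fundamental obstacle in expressing ``this is a valid computation'' is the \emph{local consistency} between consecutive configurations $C_i$ and $C_{i+1}$: each must be obtainable from the previous one by a single transition of $M$. The classical trick, going back to Hartmanis and Hopcroft, is that such pairwise-matching conditions across a sequence can be checked by comparing two differently-grouped versions of the same trace --- grouping configurations as $(C_0, C_1), (C_2, C_3), \ldots$ in one copy and as $(C_1, C_2), (C_3, C_4), \ldots$ in a shifted copy. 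This is exactly the role played by the nested structure $a^n b^n$ inside $G_{\#}$: the matched blocks of $a$'s and $b$'s serve as the primitive synchronization mechanism that enforces equality of lengths (and, via rational transductions reading them, equality of configuration contents).

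\medskip

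Concretely, I would proceed as follows. The base language $G_{\#} = \{a^n b^n \#\}^*$ provides, for free, a sequence of independently-chosen matched pairs. I would design rational transductions $T_1$ and $T_2$ that each read such a sequence and \emph{produce} a candidate computation history, where each matched pair $a^n b^n$ is used to guarantee that a symbol copied from one configuration agrees with the corresponding symbol in the adjacent configuration; the transducer can translate the $a$-block and the $b$-block into the two occurrences of a symbol it must verify are equal, while its finite control checks the local transition rule. The key is that $T_1 G_{\#}$ generates all sequences satisfying the ``even-indexed'' consistency constraints and $T_2 G_{\#}$ generates all sequences satisfying the ``odd-indexed'' constraints; their intersection $T_1 G_{\#} \cap T_2 G_{\#}$ therefore consists exactly of encodings of genuine accepting computation histories. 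A final morphism $\alpha$ then extracts the input word $x$ from such a history (erasing the internal configuration markup), yielding precisely $L$.

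\medskip

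The main obstacle I expect is the careful bookkeeping in the transducer construction: arranging the encoding of configurations so that \emph{every} adjacency constraint is captured by exactly one of the two copies, and so that the nested $a^n b^n$ matching supplied by $G_{\#}$ lines up correctly with the symbol positions the transducer needs to compare. In particular one must ensure that the length-matching forced by $a^n b^n$ translates into content-matching of configurations --- this typically requires encoding each tape symbol in unary (or in a fixed-length block) so that ``same number of $a$'s as $b$'s'' becomes ``same symbol.'' I would note that this is a classical and standard result, so rather than building the transducers from scratch, I would cite Hartmanis and Hopcroft and adapt their construction; the contribution of this lemma within the paper is not the result itself but its packaging in terms of the specific generator $G_{\#}$ that has already been shown $\Sigma_1$-definable (\autoref{lem:generator}) and against which rational transductions have been shown $\Sigma_1$-definable (\autoref{lem:transductions}).
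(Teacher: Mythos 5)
There is a genuine gap in the mechanism you describe. Your plan encodes accepting computation histories $C_0 \# C_1 \# \cdots \# C_m$ of a Turing machine and has $T_1 G_{\#}$ enforce the even-indexed and $T_2 G_{\#}$ the odd-indexed adjacency constraints, with the blocks $a^n b^n$ supposedly yielding ``equality of configuration contents.'' This cannot work with this particular generator. The language $G_{\#}$ is a one-counter language, and the one-counter languages form a rational cone, i.e., they are closed under rational transductions; hence \emph{every} language of the form $T G_{\#}$ is again one-counter. By contrast, a language enforcing $C_i \vdash C_{i+1}$ between configuration \emph{strings} over a tape alphabet with at least two symbols is palindrome-like: context-free, but not one-counter. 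Your proposed fix of encoding each tape symbol in unary does not rescue this, because the matched $a$- and $b$-blocks of $G_{\#}$ are \emph{adjacent}, whereas symbolwise comparison of two configurations requires a nested (or, without reversal, crossing) pattern of matchings; a rational transducer produces output in input order and cannot rearrange adjacent matched pairs into such a pattern. The shifted-pairing construction you sketch is in fact the classical proof that every recursively enumerable language equals $h(L_1 \cap L_2)$ for \emph{context-free} $L_1, L_2$, whose generator is a Dyck language---strictly more powerful than $G_{\#}$.

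The missing idea---and what Hartmanis and Hopcroft actually do---is to pass through counter machines rather than Turing machine configurations. (The paper's proof of \autoref{hartmanis-hopcroft} is simply a citation of Theorem~1 and the proof of Theorem~2 of \cite{HartmanisHopcroft1970}, followed by a sketch of exactly this argument.) Every recursively enumerable language is accepted by a four-counter machine; the counter values $p,q,r,s$ are G\"odel-encoded as a single integer $2^p 3^q 5^r 7^s$, so that a step becomes multiplication, division, or a non-divisibility test by $2,3,5,7$; and this register is simulated by two counters each of which, between consecutive zero tests, first only increments and then only decrements. The trace of one such counter---written with $a$ for increment, $b$ for decrement, $\#$ for zero test---is precisely a word of $G_{\#}$, and multiplication by a constant is implemented by a transducer emitting a fixed number of output symbols per input letter, so the \emph{adjacent} matching $a^n b^n$ suffices. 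Your even/odd shifted-pairing intuition then does reappear, but at the level of unary-encoded register values transferred back and forth between the two counters, not of configuration strings. Your stated fallback of citing Hartmanis--Hopcroft coincides with what the paper does; however, since what you describe as ``their construction'' is really the different CFL-intersection construction, adapting it as written would not go through.
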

\begin{proof}
	This follows directly from \cite[Theorem~1]{HartmanisHopcroft1970} and
	the proof of \cite[Theorem~2]{HartmanisHopcroft1970}.
\end{proof}

\noindent Let us briefly sketch the proof of \autoref{hartmanis-hopcroft}.  It essentially
states that every recursively enumerable language can be accepted by a machine
with access to two counters that work in a restricted way.  The two counters
have instructions to \emph{increment}, \emph{decrement}, and \emph{zero test}
(which correspond to the letters $a$, $b$, and $\#$ in $G_{\#}$). The
restriction, which we call ``locally one-reversal''~(L1R) is that in between
two zero tests of some counter, the instructions of that counter must be
\emph{one-reversal}: There is a phase of increments and then a phase of
decrements (in other words: after a decrement, no increments are allowed until
the next zero test).  

To show this, Hartmanis and Hopcroft use the classical fact that every
recursively enumerable language can be accepted by a four counter machine
(without the L1R property). Then, the four counter values $p,q,r,s$ can be
encoded as $2^p 3^q 5^r 7^s$ in a single integer register that can (i)~multiply
with, (ii)~divide by, (iii)~test non-divisibility by the constants $2,3,5,7$.
Such a register, in turn, is easily simulated using two L1R-counters: For
example, to multiply by $f\in\{2,3,5,7\}$, one uses a loop that decrements the
first counter and increments the second by $f$, until the first counter is
zero. The other instructions are similar.
\begin{lem}\label{re-transducer}
	For every recursively enumerable relation $R\subseteq(\{a,b\}^*)^k$,
	there is a rational transduction $T\subseteq (\{a,b\}^*)^{k+2}$ such
	that
	\begin{equation} (w_1,\ldots,w_k)\in R \iff \exists u,v\in G\colon (w_1,\ldots,w_k,u,v)\in T. \label{re-transducer-eq}\end{equation}
\end{lem}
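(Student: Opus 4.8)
The plan is to reduce the relation $R$ to a single recursively enumerable language, invoke \autoref{hartmanis-hopcroft}, and then repackage the resulting expression as one transduction whose last two coordinates range over $G$.

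\textbf{Reducing $R$ to a language.} Let $\bot$ be a fresh padding symbol and encode a tuple by its convolution $\mathrm{conv}(w_1,\ldots,w_k)$, the word over $(\{a,b,\bot\})^k$ of length $\max_i|w_i|$ whose $i$-th track is $w_i$ right-padded with $\bot$. Then $L_R=\{\mathrm{conv}(\vec w)\mid \vec w\in R\}$ is recursively enumerable, and each $w_i$ is recovered by the track-extraction morphism $\rho_i$ (sending a letter to its $i$-th component, or to $\varepsilon$ if that component is $\bot$). Applying \autoref{hartmanis-hopcroft} gives $L_R=\alpha(T_1 G_{\#}\cap T_2 G_{\#})$; setting $\alpha_i=\rho_i\circ\alpha$, one checks that
\[ \vec w\in R \iff \exists z\in T_1 G_{\#}\cap T_2 G_{\#}\colon \bigwedge_{i=1}^k \alpha_i(z)=w_i. \]
The forward direction is immediate from $\alpha(z)=\mathrm{conv}(\vec w)$. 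For the converse, any witnessing $z$ has $\alpha(z)\in L_R$, so $\alpha(z)=\mathrm{conv}(\vec{w'})$ for some $\vec{w'}\in R$, whence $w_i=\alpha_i(z)=\rho_i(\alpha(z))=w'_i$ and $\vec w=\vec{w'}\in R$.

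\textbf{From $G_{\#}$ to $G$.} Let $\mu$ be the rational transduction inserting a $\#$ immediately after every maximal block of $b$'s (including a trailing one), and let $G_{\#}^{+}=\{a^nb^n\#\mid n\ge 1\}^*$. For $u\in G$, written as a product of nonempty blocks $a^{n_j}b^{n_j}$, the inserted markers land exactly between blocks, so $\mu$ is a bijection from $G$ onto $G_{\#}^{+}$ with $\pi_{a,b}(\mu(u))=u$. The crucial point is that membership $u\in G$ forces, block by block, that the number of $a$'s equals the number of $b$'s, which is exactly the balance condition making the counter zero-tests of \autoref{hartmanis-hopcroft} succeed. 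Since a transducer cannot itself check this non-regular condition, it must be supplied by the external quantifier $\exists u\in G$, and this works faithfully only when no block is empty. I therefore arrange, by normalizing the two-counter machine underlying \autoref{hartmanis-hopcroft} so that it performs one dummy increment and decrement of the active counter right before every zero-test, that $L_R=\alpha(T_1 G_{\#}^{+}\cap T_2 G_{\#}^{+})$. Putting $S_j=T_j\circ\mu$ (rational), the $\mu$-bijection yields $\exists u\in G\colon (u,z)\in S_j \iff z\in T_j G_{\#}^{+}$.

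\textbf{Assembling the transducer.} Combining the two paragraphs gives
\[ \vec w\in R \iff \exists u,v\in G~\exists z\colon (u,z)\in S_1 \wedge (v,z)\in S_2 \wedge \bigwedge_{i=1}^k \alpha_i(z)=w_i, \]
so it remains to see that the $(k+2)$-ary relation
\[ T=\{(w_1,\ldots,w_k,u,v)\mid \exists z\colon (u,z)\in S_1,~(v,z)\in S_2,~\alpha_i(z)=w_i~\forall i\} \]
is rational, as then this is precisely \eqref{re-transducer-eq}. I build $T$ by a product construction synchronized on the hidden word $z$: from transducers for $S_1$, for $S_2$, and for the rational relation $\{(z,w_1,\ldots,w_k)\mid w_i=\alpha_i(z)\}$ (realizable because the $\alpha_i$ are morphisms), one forms a single transducer over the tracks $w_1,\ldots,w_k,u,v,z$ whose transitions read a common $z$-symbol in all three components while permitting independent $\varepsilon$-moves on $z$ in $S_1$ and $S_2$; projecting the $z$-track away (relabelling it by $\varepsilon$) and reordering tracks yields $T$. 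This is the multi-track analogue of closure under composition, which is exactly a join on a shared coordinate followed by its projection, and is carried out in the standard way. I expect the genuinely delicate step to be the passage from $G_{\#}$ to $G$, where the $\#$'s act purely as block separators and the balance condition has to be delegated to the external quantifier $\exists u\in G$; the synchronized product, by contrast, is routine.
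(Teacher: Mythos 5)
Your high-level architecture is sound and genuinely parallels the paper's proof in skeleton: reduce the relation to a single recursively enumerable language (you use convolution where the paper uses the fact that every r.e.\ relation is the tuple of morphic images of an r.e.\ language --- both work), invoke \autoref{hartmanis-hopcroft}, convert $G_{\#}$ into $G$ via a rational transduction, and finish with a synchronized product transducer, which matches the paper's assembly step and is indeed routine. However, the step you yourself flag as delicate contains a genuine error. Your normalization --- ``one dummy increment and decrement of the active counter right before every zero-test'' --- does \emph{not} yield $L_R=\alpha(T_1 G_{\#}^{+}\cap T_2 G_{\#}^{+})$. In the Hartmanis--Hopcroft setup, the counter history between two consecutive zero tests must be one-reversal, i.e.\ of the form $a^n b^n$: a phase of increments followed by a phase of decrements. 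Your padding turns a nonempty phase $a^m b^m$ (with $m\ge 1$) into $a^m b^m a b$, which performs an increment after a decrement; the corresponding history block $a^m b^m a b\,\#$ is not of the form $a^n b^n \#$ for any $n$, so the padded machine's legal counter histories fall outside $(a^n b^n \#)^*$ entirely, and $T_j G_{\#}^{+}$ misses every accepting computation containing a nonempty phase. The repair is easy but must be stated correctly: place the dummy increment at the \emph{start} of each phase and the dummy decrement at its end (turning $a^m b^m \#$ into $a^{m+1}b^{m+1}\#$), or pad only phases with no counter activity, which the finite control can track with one bit per counter. Alternatively, you can avoid reopening the black box of \autoref{hartmanis-hopcroft} altogether by making $\mu$ nondeterministic: insert one $\#$ after each maximal $b$-block and arbitrarily many additional $\#$'s at the front and after any $\#$. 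Then for $u\in G$ every output lies in $G_{\#}$, and every word of $G_{\#}$ arises (its empty blocks sit at the front or directly after a $\#$), so $S_j G = T_j G_{\#}$ and the lemma applies exactly as stated.

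It is worth noting how the paper sidesteps this issue: instead of a marker-\emph{inserting} map from $G$, it observes that $G_{\#}=(a^*b^*\#)^*\cap\gamma^{-1}(G)$ for the morphism $\gamma$ with $\gamma(a)=a$, $\gamma(b)=b$, $\gamma(\#)=abab$, and then uses the fact (Nivat/Berstel) that inverse morphism followed by intersection with a regular language is a single rational transduction $S$, giving $G_{\#}=SG$ directly. Because the separator $abab$ is itself a nonempty word of $G$, empty blocks $a^0b^0\#$ are handled for free, so no normalization of the counter machine is needed and \autoref{hartmanis-hopcroft} is used strictly as a black box. Your functional bijection $\mu\colon G\to G_{\#}^{+}$ is an attractive alternative precisely because it is so concrete, but it forces the nonempty-block strengthening that your proposal gets wrong; with one of the fixes above, your proof goes through.
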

\begin{proof}
	We shall build $T$ out of several other transductions. These will be
	over larger alphabets, but since we merely compose them to obtain $T$,
	this is not an issue.

	A standard fact from computability theory states that a relation is
	recursively enumerable if and only if it is the homomorphic image
	of some recursively enumerable language.
	In particular, there is a recursively enumerable language $L\subseteq B^*$ and
	morphisms $\beta_1,\ldots,\beta_k$ such that $R=\{(\beta_1(w),\ldots,\beta_k(w)) \mid
	w\in L\}$.  By \autoref{hartmanis-hopcroft}, we may write
	$L=\alpha(T_1G_{\#}\cap T_2G_{\#})$ for a morphism $\alpha\colon C^*\to B^*$ and
	rational transductions $T_1,T_2\subseteq \{a,b,\#\}^* \times C^*$.
  
	Notice that if $\gamma\colon \{a,b,\#\}^*\to\{a,b\}^*$ is the morphism
	with $\gamma(a)=a$, $\gamma(b)=b$, and $\gamma(\#)=abab$, then
	$G_{\#}=(a^*b^*\#)^*\cap\gamma^{-1}(G)$.
	Taking the pre-image under a morphism (here $\gamma$) and then intersecting with the regular language (here $(a^*b^*\#)^*$)
	can be performed by a single rational transduction~\cite[Theorem~3.2]{Berstel1979}.
	This means, there is a rational transduction $S\subseteq \{a,b\}^* \times \{a,b,\#\}^*$ with
	$G_{\#}=SG$. Therefore, we can replace $G_{\#}$ in the above expression
  for $L$ and arrive at $L=\alpha\big((T_1(SG)\cap (T_2 (SG)\big) = \alpha\big((T_1 \circ S)G\cap (T_2 \circ S)G\big)$.
  In sum, we observe that	$(w_1,\ldots,w_k)\in R$
	if and only if there exists a $w\in C^*$ with $w\in (T_1 \circ
	S)G$ and $w\in (T_2 \circ S)G$ such that $w_i=\beta_i(\alpha(w))$ for
	$i\in\{1,\ldots,k\}$.  Consider the relation 
	\[ T= \{(\beta_1(\alpha(w)), \ldots, \beta_k(\alpha(w)), u, v) \mid w \in C^*,~ (u,w) \in T_1 \circ S,~ (v,w) \in T_2 \circ S \}.\]
	Note that $T$ is rational: A transducer can guess $w$, letter by
	letter, and on track $i\in\{1,\ldots,k\}$, it outputs the image under
	$\beta_i(\alpha(\cdot))$ of each letter. To compute the output on
	tracks $k+1$ and $k+2$, it simulates transducers for $T_1 \circ S$ and
	$T_2 \circ S$.  Moreover, we have
	$T\subseteq(\{a,b\}^*)^{k+2}$ and our observation implies that
	\eqref{re-transducer-eq} holds.
\end{proof}

\begin{lem} \label{binaryGPRM}
	Let $A$ be an alphabet with $\{a,b\} \subseteq A$ and $|A| \ge 3$.  Then
	every recursively enumerable relation $R \subseteq (\{a,b\}^*)^k$ is
	$\Sigma_1$-definable over $A$.
\end{lem}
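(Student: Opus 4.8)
The plan is to assemble the result directly from the ingredients established in Steps~III--V. The key insight from \autoref{re-transducer} is that every recursively enumerable relation $R\subseteq(\{a,b\}^*)^k$ can be captured as
\[ (w_1,\ldots,w_k)\in R \iff \exists u,v\in G\colon (w_1,\ldots,w_k,u,v)\in T \]
for some rational transduction $T\subseteq(\{a,b\}^*)^{k+2}$. So the entire task reduces to expressing this right-hand side as a $\Sigma_1$-formula over $\FOconst{A}$. First I would observe that we have all the necessary building blocks: the generator language $G$ is $\Sigma_1$-definable over $A$ by \autoref{lem:generator} (using $|A|\ge 3$ and $\{a,b\}\subseteq A$), and the rational transduction $T$ is $\Sigma_1$-definable by \autoref{lem:transductions}.

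The main subtlety to check is the alphabet bookkeeping in \autoref{lem:transductions}, which requires an auxiliary letter: to define a transduction over $\{a,b\}$ we need a larger alphabet $B$ with $|B|\ge|A_{\mathrm{trans}}|+1$, where $A_{\mathrm{trans}}=\{a,b\}$ is the alphabet over which $T$ operates. Since $|A|\ge 3$, the alphabet $A$ itself already contains at least one letter beyond $\{a,b\}$, so $A$ plays the role of $B$ and the hypotheses $|\{a,b\}|+1\le|A|$ and $|A|\ge 3$ of \autoref{lem:transductions} are satisfied. Thus $T$ is $\Sigma_1$-definable over $A$. The same auxiliary letter is what \autoref{lem:generator} uses to define $G$, so no conflict arises.

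Putting these together, I would define $R$ by the formula
\[ \exists u\,\exists v\colon~ \varphi_G(u) ~\wedge~ \varphi_G(v) ~\wedge~ \varphi_T(w_1,\ldots,w_k,u,v), \]
where $\varphi_G$ is the $\Sigma_1$-formula defining $G$ from \autoref{lem:generator} and $\varphi_T$ is the $\Sigma_1$-formula defining $T$ from \autoref{lem:transductions}. Since the $\Sigma_1$-fragment is closed under existential quantification and conjunction (a conjunction of existential formulas is again existential after pulling the quantifiers to the front, and prenexing preserves the existential form), this formula is itself in $\Sigma_1$. By \eqref{re-transducer-eq}, it defines exactly $R$.

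I do not expect a genuine obstacle here, since the heavy lifting was done in the earlier steps; the only point requiring care is confirming the alphabet-size hypotheses of \autoref{lem:transductions} and \autoref{lem:generator} are met with $A$ serving as the ambient alphabet, and verifying that the free variables $w_1,\ldots,w_k$ of $\varphi_T$ are correctly identified with the coordinates of $R$ while $u,v$ are bound. The conjunction-and-existential-closure argument is routine.
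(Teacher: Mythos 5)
Your proposal is correct and matches the paper's proof essentially verbatim: both take $T$ from \autoref{re-transducer}, verify that $A$ itself satisfies the alphabet hypotheses of \autoref{lem:transductions} (since $|A|\ge|\{a,b\}|+1=3$), define $G$ via \autoref{lem:generator}, and combine the resulting $\Sigma_1$-formulas by conjunction and existential quantification over $u,v$. No gaps; your extra care about the alphabet bookkeeping and prenexing is exactly the (routine) checking the paper leaves implicit.
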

\begin{proof}
	Take the rational transduction $T$ as in \autoref{re-transducer}.  Since
	$T\subseteq(\{a,b\}^*)^{k+2}$ and $|A|\ge |\{a,b\}|+1$,
	\autoref{lem:transductions} and \autoref{lem:generator} yield the result.
\end{proof}

\subsection*{Step VI: Arbitrary recursively enumerable relations}
We have seen that if $|A|\ge 3$, then we can define over $A$ every recursively
enumerable relation over two letters. In the proof, we use a third letter as an
auxiliary letter. Our last step is to define all recursively enumerable
relations that can use all letters of $A$ freely. This clearly implies \autoref{thm:main}.
To this end, we observe that every word is determined by its
binary projections.
\begin{lem}\label{lem:binaryprojections}
Let $A$ be an alphabet with $|A|\ge 2$ and let $u,v\in A^*$ such that for every binary alphabet
$B\subseteq A$, we have $\pi_B(u)=\pi_B(v)$. Then $u=v$.
\end{lem}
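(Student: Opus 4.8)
The plan is to prove that a word $w\in A^*$ is uniquely determined by the family of its binary projections $\pi_B(w)$, ranging over all two-element subalphabets $B\subseteq A$. First I would observe that the projection onto a single letter already recovers the letter counts: for each $a\in A$, picking any $B\ni a$ gives $|u|_a=|\pi_B(u)|_a=|\pi_B(v)|_a=|v|_a$, so $u$ and $v$ have the same length and the same multiset of letters. In particular $|u|=|v|$, so it suffices to show that $u$ and $v$ agree position by position.

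The natural route is induction on the length $n=|u|=|v|$. The base cases $n=0,1$ are immediate (for $n=1$ the single letter is read off from letter counts). For the inductive step, I would compare the first letters $u_1$ and $v_1$. Suppose toward a contradiction that $u_1=a$ and $v_1=a'$ with $a\neq a'$. Both $a$ and $a'$ occur in $u$ and in $v$ (same letter counts, and $a'$ occurs in $v$ since it starts there, hence also in $u$ by equal counts, and symmetrically). Now consider the binary alphabet $B=\{a,a'\}$. In $\pi_B(u)$, the first letter is $a$ (since $u$ starts with $a\in B$ and any earlier letters would have to lie outside $B$, but $u_1=a$ is itself in $B$), whereas in $\pi_B(v)$ the first letter is $a'$. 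This contradicts $\pi_B(u)=\pi_B(v)$. Hence $u_1=v_1$.

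Having matched the first letters, write $u=cu'$ and $v=cv'$ with $c=u_1=v_1$. I then need $\pi_B(u')=\pi_B(v')$ for every binary $B$ so as to apply the induction hypothesis to the shorter words $u',v'$. This is where a small amount of care is warranted. For a binary alphabet $B$ containing $c$, we have $\pi_B(u)=c\,\pi_B(u')$ and $\pi_B(v)=c\,\pi_B(v')$, so cancelling the leading $c$ gives $\pi_B(u')=\pi_B(v')$. For a binary alphabet $B$ \emph{not} containing $c$, the leading $c$ is simply dropped by the projection, so $\pi_B(u)=\pi_B(u')$ and $\pi_B(v)=\pi_B(v')$, and again equality transfers. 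Thus in all cases $\pi_B(u')=\pi_B(v')$ for every binary $B\subseteq A$, and by induction $u'=v'$, whence $u=cu'=cv'=v$.

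The only genuinely delicate point, and the one I would write out most carefully, is the first-letter argument: identifying the leading letter of a word from its binary projections. The subtlety is that the claimed ``first letter'' must really be visible in the projection onto $\{a,a'\}$, which relies on the fact that projection preserves the left-to-right order of the retained letters, so the earliest retained letter of $u$ is exactly the first letter of $u$ that lies in $B$. Once this order-preservation of $\pi_B$ is made explicit, the contradiction is clean. I expect the induction itself to be routine; the main obstacle is merely stating the first-letter comparison precisely enough that the case $a\neq a'$ is handled without hidden assumptions about which letters occur.
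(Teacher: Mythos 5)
Your proof is correct and rests on the same key insight as the paper's: project onto the binary alphabet $\{a,a'\}$ formed by the two distinct letters at the first position of disagreement, where the order-preservation of $\pi_B$ makes that disagreement visible in the projections. The paper packages this as a single argument via the longest common prefix ($u=wau'$, $v=wbv'$ with $a\neq b$, so $\pi_{a,b}(u)$ and $\pi_{a,b}(v)$ differ right after $\pi_{a,b}(w)$) rather than your induction that peels off matched first letters, but the two arguments are essentially the same.
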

\begin{proof}
	Towards a contradiction, suppose $u\ne v$. We clearly have $|u|=|v|$.
	Thus, if $w\in A^*$ is the longest common prefix of $u$ and $v$, then
	$u=wau'$ and $v=wbv'$ for some letters $a\ne b$ and words $u',v'\in
	A^*$. But then the words $\pi_{a,b}(u)$ and $\pi_{a,b}(v)$ differ:
	After the common prefix $\pi_{a,b}(w)$, the word $\pi_{a,b}(u)$ continues
	with $a$ and the word $\pi_{a,b}(v)$ continues with $b$.
\end{proof}

We now fix $a,b \in A$ with $a \neq b$.
For any binary alphabet $B \subseteq A$ let $\rho_B \colon A^* \to \{a,b\}^*$ be any morphism with
$\rho_B(B) = \{a,b\}$ and $\rho_B(c) = \varepsilon$ for all $c \in A \setminus B$,
i.e., $\rho_B$ first projects a word over $A$ to $B$ and then renames the letters from $B$ to $\{a,b\}$. Recall that $\binom{|A|}{2}$ is the number of binary alphabets $B\subseteq A$.
We define the encoding function $e \colon A^* \to (\{a,b\}^*)^{\binom{|A|}{2}}$
which maps a word $u \in A^*$ to the tuple consisting of all words $\rho_B(u)$
for all binary alphabets $B \subseteq A$ (in some arbitrary order). Note that $e$ is
injective by \autoref{lem:binaryprojections}.

\begin{lem}\label{lem:encoding}
	If $|A|\ge 3$, then $e \colon A^* \to (\{a,b\}^*)^{\binom{|A|}{2}}$ is $\Sigma_1$-definable over $A$.
\end{lem}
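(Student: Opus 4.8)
The plan is to define $e$ coordinatewise. Since a finite conjunction of existential formulas is again existential, it suffices to show that for each binary alphabet $B=\{c,d\}\subseteq A$ the graph $\{(x,\rho_B(x))\mid x\in A^*\}$ is $\Sigma_1$-definable over $A$, and then set $e(x)=(y_B)_B$ via $\bigwedge_B (y_B=\rho_B(x))$. Fix such a $B$. Since $\rho_B=\tau\circ\pi_B$, where $\pi_B$ is the (already definable) projection to $B^*$ and $\tau$ is the letter renaming that sends the two letters of $B$ bijectively to $a,b$, the task reduces to defining the renaming $\tau$.

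The hard part is precisely the renaming. The natural idea---viewing $\tau$ as a blockwise transduction with blocks of length one and invoking \autoref{lem:blockwise}---fails here, because that lemma needs an auxiliary letter strictly outside the transduction's alphabet, whereas the renaming already involves up to four letters ($a,b,c,d$) while we only assume $|A|\ge 3$ (and in the worst case $A=\{a,b,c,d\}$ leaves no spare symbol). To get around this I would decompose $\tau$ into single-letter renamings that each touch only three letters, for which \autoref{lem:replacement} provides exactly the right tool without any extra symbol.

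Concretely, for three pairwise distinct letters $p,p',q\in A$ let $\sigma$ be the renaming sending $z\in\{p,q\}^*$ to the word over $\{p',q\}$ obtained by replacing every $p$ by $p'$ and leaving $q$ untouched. I claim $\sigma$ is $\Sigma_1$-definable over $A$ via
\[
	y=\sigma(z)\iff \exists w\colon~ w\in\{pp',q\}^* ~\wedge~ \pi_{p,q}(w)=z ~\wedge~ \pi_{p',q}(w)=y.
\]
Indeed, writing $w=s_1\cdots s_m$ with each $s_i\in\{pp',q\}$, the projection $\pi_{p,q}$ reads $p$ from a block $pp'$ and $q$ from a block $q$, while $\pi_{p',q}$ reads $p'$ from $pp'$ and $q$ from $q$; hence the two projections coincide except that each $p$ is turned into $p'$, which is exactly $\sigma$. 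The language $\{pp',q\}^*$ is $\Sigma_1$-definable over $\{p,p',q\}\subseteq A$ by relabeling \autoref{lem:replacement} (renaming $a,b,\#$ to $p,p',q$, so that the free letter $q$ plays the role of $\#$ and the contiguous pair $pp'$ plays the role of $ab$), and the projections are definable; so $\sigma$ is $\Sigma_1$-definable over $A$.

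Finally I would assemble $\tau$ from these basic renamings. If $B=\{a,b\}$ one may take $\rho_B=\pi_{a,b}$ outright. If exactly one letter of $B$ lies in $\{a,b\}$, a single application of $\sigma$ suffices (rename the other letter of $B$ to the missing target letter, keeping the shared one); here only three distinct letters occur, so $|A|\ge 3$ is enough. If $B=\{c,d\}$ is disjoint from $\{a,b\}$, then $|A|\ge 4$, and I would compose two basic renamings through an intermediate alphabet---first $\{c,d\}^*\to\{a,d\}^*$ (rename $c\to a$, keep $d$) and then $\{a,d\}^*\to\{a,b\}^*$ (rename $d\to b$, keep $a$)---chaining them with an existential quantifier over the intermediate word. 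Each step involves only three of the four available letters, so no symbol outside $A$ is ever required. Composing with $\pi_B$ and conjoining over all $B$ then yields the desired $\Sigma_1$-definition of $e$.
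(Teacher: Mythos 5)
Your proposal is correct and follows essentially the same route as the paper's proof: the core gadget is identical (defining a letter renaming $\sigma$ that fixes a letter via $\exists w \in \{pp',q\}^*$ with two projections, using \autoref{lem:replacement} relabeled), and you likewise reduce $e$ to projections composed with such renamings. The only cosmetic difference is that your explicit case analysis, exploiting the freedom in choosing $\rho_B$, composes at most two letter-fixing renamings where the paper generically states that at most three suffice for an arbitrary binary renaming.
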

\begin{proof}
	For binary alphabets $B,C\subseteq A$, a map $\sigma\colon B^*\to C^*$
	is called a \emph{binary renaming} if (i)~$\sigma$ is a 
	word morphism and (ii)~$\sigma$ restricted to $B$ is a bijection of $B$
	and $C$. If, in addition, there is a letter $\#\in B\cap C$ such that
	$\sigma(\#)=\#$, then we say that $\sigma$ \emph{fixes a letter}.

	Observe that if we can $\Sigma_1$-define all binary renamings, then the
	encoding function $e$ can be $\Sigma_1$-defined using projections and
	binary renamings. Thus, it remains to define all binary renamings. For
	this, note that every binary renaming can be written as a composition
	of (at most three) binary renamings that each fix some letter. Hence, it
	suffices to define any binary renaming that fixes a letter. Suppose
	$\sigma \colon \{c,\#\}^* \to \{d,\#\}^*$ with $\sigma(c) = d$ and
	$\sigma(\#) = \#$.  Without loss of generality, we assume $c\ne d$. Then $\sigma$ is $\Sigma_1$-definable since
	\[ \sigma(u) = v \iff \exists w\in \{cd,\#\}^*\colon~ u=\pi_{c,\#}(w)~\wedge~ v=\pi_{d,\#}(w). \]
	and $\{cd,\#\}^*$ is definable by \autoref{lem:replacement}.
\end{proof}

\begin{thm} \label{thm:GPRM}
	Let $A$ be an alphabet with $|A| \ge 3$.
	Then every recursively enumerable relation $R \subseteq (A^*)^k$ is $\Sigma_1$-definable in $\FOconst{A}$.
\end{thm}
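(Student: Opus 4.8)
The plan is to reduce the general statement, over an arbitrary word relation $R \subseteq (A^*)^k$, to the two-letter case already established in \autoref{binaryGPRM}, using the encoding function $e$ from \autoref{lem:encoding}. The key observation is that $e$ is injective (by \autoref{lem:binaryprojections}) and componentwise maps into $\{a,b\}^*$, so a $k$-tuple of $A$-words is faithfully represented by a $k\cdot\binom{|A|}{2}$-tuple of $\{a,b\}$-words. The idea is to push $R$ through $e$ to obtain a relation over two letters, define that relation via \autoref{binaryGPRM}, and then use $\Sigma_1$-definability of $e$ itself to tie the two together.

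Concretely, first I would apply the encoding function coordinatewise: define $e^{(k)}\colon (A^*)^k \to (\{a,b\}^*)^{k\binom{|A|}{2}}$ by $e^{(k)}(v_1,\dots,v_k) = (e(v_1),\dots,e(v_k))$, and set $R' = \{\, e^{(k)}(v_1,\dots,v_k) \mid (v_1,\dots,v_k)\in R \,\}$. Since $R$ is recursively enumerable and $e$ is a computable injection, $R'$ is a recursively enumerable relation over $\{a,b\}$, hence $\Sigma_1$-definable over $A$ by \autoref{binaryGPRM}. Let $\psi$ be a $\Sigma_1$-formula defining $R'$, with free variables indexed by the pairs $(i,B)$ for $i\in\{1,\dots,k\}$ and binary $B\subseteq A$. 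Separately, \autoref{lem:encoding} gives, for each $i$, a $\Sigma_1$-formula $\varphi_e(x_i, (z_{i,B})_{B})$ expressing that the tuple $(z_{i,B})_{B}$ equals $e(x_i)$.

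I would then combine these as follows: $(v_1,\dots,v_k)\in R$ holds if and only if
\[
	\exists (z_{i,B})_{i,B}\colon~ \bigwedge_{i=1}^{k} \varphi_e(x_i,(z_{i,B})_{B}) ~\wedge~ \psi\big((z_{i,B})_{i,B}\big).
\]
The correctness is immediate from injectivity of $e$: the existentially quantified $z_{i,B}$ are forced to be exactly the binary projections-with-renaming of each $x_i$, and then membership in $R'$ is equivalent to membership in $R$. Since $\varphi_e$ and $\psi$ are both $\Sigma_1$, and $\Sigma_1$-formulas are closed under conjunction and existential quantification, the combined formula is $\Sigma_1$ over $A$.

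I do not expect a genuine obstacle here, as all the heavy lifting was done in Steps I--VI; the only point requiring mild care is the bookkeeping to ensure $R'$ really is recursively enumerable (it is, being the image of the r.e.\ relation $R$ under the computable injection $e^{(k)}$) and that the free-variable correspondence between $\psi$ and the quantified block matches the tuple ordering fixed by the paper's convention. The mild subtlety is simply confirming that, for a fixed input tuple, the values $z_{i,B}$ satisfying the $\varphi_e$ conjuncts are \emph{uniquely} determined, so that the conjunction with $\psi$ tests membership of the correct encoded tuple rather than some spurious one; this is exactly the injectivity of $e$ guaranteed by \autoref{lem:binaryprojections}.
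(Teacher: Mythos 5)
Your proposal is correct and follows essentially the same route as the paper's own proof: encode $R$ coordinatewise via $e$ into a recursively enumerable relation over $\{a,b\}$, define that image using \autoref{binaryGPRM}, and recover $R$ through the $\Sigma_1$-definability of $e$ (\autoref{lem:encoding}) together with its injectivity (\autoref{lem:binaryprojections}). The only difference is presentational—you spell out the combined existential formula explicitly, while the paper states the equivalence $(w_1,\ldots,w_k)\in R \iff (e(w_1),\ldots,e(w_k))\in e(R)$ and leaves the formula assembly implicit.
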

\begin{proof}
	The encoding function $e$ is clearly computable and injective by \autoref{lem:binaryprojections}.
	Therefore a relation $R \subseteq (A^*)^k$ is recursively enumerable if and only if the image
	\[ e(R)=\{(e(w_1),\ldots,e(w_k)) \mid (w_1,\ldots,w_k)\in R\} \subseteq (\{a,b\}^*)^{k \cdot \binom{|A|}{2}} \]
	is recursively enumerable. This means that $e(R)$ is $\Sigma_1$-definable over $A$ by \autoref{binaryGPRM}. Thus, we can define $R$ as well, since we have
	\[ (w_1,\ldots,w_k)\in R \iff (e(w_1),\ldots,e(w_k))\in e(R), \]
  and the function $e$ is $\Sigma_1$-definable over $A$ by \autoref{lem:encoding}.
\end{proof}

\section{Restricting the signature to a single constant}
\label{sec:single-constant}

In \cite[Remark~3.4]{HalfonSZ17} the authors observe that their undecidability result
for the existential fragment of subword logic with constants still holds, even if
only a finite set of constant symbols is allowed.
More precisely, they show that for any alphabet $A$ with $|A| \geq 2$ there are
finitely many words $w_1, \ldots, w_n \in A^*$ such that the truth problem for the
$\Sigma_1$-fragment over the structure $\FOsomeconst{A}{w_1,\ldots,w_n}$ is undecidable.
Furthermore, they remark that one can strengthen this result even more to only requiring
a single constant $W \in A^*$.
The proof of the latter can be found in the extended version \cite[Theorem~3.5]{HalfonSZ17arxiv}.
Using similar techniques, we can likewise strengthen our $\Sigma_1$-definability result
for alphabets of size at least $3$:

\begin{thm} \label{thm:GPRM-1c}
  Let $A$ be an alphabet with $|A| \ge 3$.
  Then there is a fixed word $W \in A^*$ such that every recursively enumerable relation $R \subseteq (A^*)^k$ is $\Sigma_1$-definable in $\FOsomeconst{A}{W}$.
\end{thm}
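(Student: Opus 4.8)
The plan is to reduce \autoref{thm:GPRM-1c} to a single statement about defining individual words from one constant, namely:
\begin{equation*}
\text{($\star$)}\quad\text{there is a word } W\in A^* \text{ such that for every } w\in A^* \text{ the predicate ``}x=w\text{'' is } \Sigma_1\text{-definable over } \FOsomeconst{A}{W}.
\end{equation*}
Given ($\star$), the theorem follows by substitution. The formula supplied by \autoref{thm:GPRM} for a recursively enumerable relation $R$ is a $\Sigma_1$-formula over $\FOconst{A}$ that mentions only finitely many constants $d_1,\ldots,d_m\in A^*$. I would replace each constant $d_j$ by a fresh variable $y_j$, prepend the block $\exists y_1\cdots\exists y_m$, and conjoin the $\Sigma_1$-definitions of ``$y_j=d_j$'' provided by ($\star$). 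Since $\Sigma_1$ is closed under conjunction and existential quantification, the result is again existential and mentions only the \emph{single} constant $W$; moreover the \emph{same} $W$ works for every $R$, because ($\star$) makes \emph{all} words $W$-definable at once. This is exactly the claimed statement.

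To establish ($\star$) I would first recover, from $W$ alone, the \emph{basic} machinery: each letter $a\in A$ as a length-$1$ constant, together with concatenation and letter counting. This is precisely where the single-constant techniques of \cite[Theorem~3.5]{HalfonSZ17arxiv} enter, which already show how to define concatenation and counting from one sufficiently complex word; I would adapt their word (or a similar one over $A$) so that it additionally exposes every letter of $A$. Once the letters and concatenation are $\Sigma_1$-definable from $W$, an arbitrary word $w=c_1\cdots c_k$ becomes definable by
\[
  x=w \iff \exists z_1,\ldots,z_k\colon~ z_1=c_1 ~\wedge~ \bigwedge_{i=2}^{k} z_i=z_{i-1}c_i ~\wedge~ x=z_k,
\]
whose atoms are all $W$-definable, so the formula is $\Sigma_1$ over $\FOsomeconst{A}{W}$. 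Note that concatenation lets us build words far longer than $W$ itself, so there is no length barrier; this yields ($\star$).

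The main obstacle is the basic step: recovering the \emph{elementary} constants—above all the individual letters—existentially from a single word. The letters are exactly the $\subword$-minimal nonempty words, and minimality is on its face a universal ($\Pi_1$) condition; relatedly, pinning down the constant $1$ (e.g.\ asserting ``$|x|_a=1$'') cannot be done by counting alone, since the available count constraints are homogeneous linear and modular identities such as ``$|x|_c=2\cdot|x|_c$'' (which forces $|x|_c=0$ and hence $x\in a^*$), but offer no anchor for an exact nonzero length. The resolution is therefore to choose $W$ with enough internal structure that the exact length of a candidate word, and hence the identity of a single letter, becomes existentially checkable—precisely the difficulty already handled by the single-constant construction of \cite{HalfonSZ17arxiv}. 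After this technical heart is in place, the bootstrapping to all words and the substitution argument above are routine.
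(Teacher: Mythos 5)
Your proposal is correct and follows essentially the same route as the paper: the paper likewise reduces everything to defining, from a single suitably chosen $W$, the letters of $A$ together with the finitely many constants $w_1,\ldots,w_n$ occurring in the $\Sigma_1$-definition of concatenation, and then obtains every word of $A^*$ (hence every formula from \autoref{thm:GPRM}) by concatenating letters, exactly as in your substitution argument for ($\star$). The technical heart that you defer to \cite{HalfonSZ17arxiv} is precisely what the paper executes: it takes $W = a^{m+1}b^{m+2}c^{m+3}$ (resp.\ $a_1^{m+1}\cdots a_k^{m+k}$) with $m$ the maximal length of the $w_i$, anchors exact lengths via chains $v_0 \ssubword v_1 \ssubword \cdots \ssubword W$ of strict subwords, distinguishes the letters by the distinct lengths of the unary runs in $W$, separates the length-$2$ words via membership in $W{\downarrow}$ (where \autoref{lem:uniquely-defining-subwords} fails), and defines all constants up to length $m$ by induction using \autoref{lem:uniquely-defining-subwords} --- resolving exactly the length-anchoring obstacle you identify.
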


\noindent Like in \cite{HalfonSZ17arxiv}, we make use of the fact that any word
of length at least $3$ is uniquely defined by its set of strict subwords:

\begin{lem}[\cite{KarandikarS15}] \label{lem:uniquely-defining-subwords}
  Let $u,v$ be words with $|u| \geq 3$ and $|v| \geq 3$.
  Then $u = v$ if and only if $u {\downarrow} \setminus \{u\} = v {\downarrow} \setminus \{v\}$.
\end{lem}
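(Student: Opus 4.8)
The forward direction is immediate, so the plan is to prove the ``if'' direction: assuming $u{\downarrow}\setminus\{u\} = v{\downarrow}\setminus\{v\}$, I would first reduce to a purely combinatorial reconstruction statement. Write $D(w)$ for the \emph{deck} of $w$, i.e.\ the set of words obtained from $w$ by deleting exactly one letter. Since $u{\downarrow}\setminus\{u\}$ is precisely the downward closure of $D(u)$, its elements of maximal length are exactly $D(u)$, and this maximal length is $|u|-1$. Hence the hypothesis forces first $|u|=|v|=:n$ (comparing maximal lengths) and then, comparing the length-$(n-1)$ parts of the two sets, $D(u)=D(v)$. It therefore suffices to show that for $n\ge 3$ a length-$n$ word is determined by its deck.

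The crux --- and the step I expect to be the main obstacle --- is to show that $u$ and $v$ have the same first letter. Every word in $D(u)$ begins with $u[1]$ or with $u[2]$, since deleting position $1$ yields a word starting with $u[2]$, while every other deletion preserves $u[1]$ as the first letter. Suppose toward a contradiction that $u[1]\neq v[1]$. As $v[1]$ occurs as a first letter in $D(v)=D(u)$, it must equal $u[2]$, and symmetrically $u[1]=v[2]$. Now the deletion of position $1$ is the \emph{unique} element of $D(u)$ starting with $u[2]$, because all other deletions start with $u[1]\neq u[2]$; matching this against $D(v)$ forces all deletions of $v$ at positions $\ge 2$ to coincide, i.e.\ $v$ is constant after its first letter. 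Running the symmetric argument on $u$, I obtain $u=ab^{n-1}$ and $v=ba^{n-1}$ for distinct letters $a,b$. Their decks are $D(u)=\{b^{n-1},ab^{n-2}\}$ and $D(v)=\{a^{n-1},ba^{n-2}\}$, and for $n\ge 3$ the word $b^{n-1}$ matches neither element of $D(v)$. This is exactly where the hypothesis $n\ge 3$ enters, $n=2$ giving the genuine counterexample $ab$ versus $ba$. The contradiction establishes that the first letters agree.

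With the first letters identified, I would finish by induction on $n$. Deleting the common first letter from those words in $D(u)$ that start with it recovers exactly the deck of the length-$(n-1)$ suffix, so $D(u[2\ldots n])=D(v[2\ldots n])$; for $n\ge 4$ the suffixes have length $\ge 3$ and the induction hypothesis yields $u[2\ldots n]=v[2\ldots n]$, hence $u=v$. The base case $n=3$ is handled directly: applying the first-letter claim to $u$ and to its reversal (the deck of a reversal is the reversal of the deck, so deck equality is preserved) shows that $u$ and $v$ agree in both their first and last letters, after which inspecting which element of the common deck equals $u[2]u[3]$ pins down the middle letter, completing the proof.
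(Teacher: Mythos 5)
The paper contains no proof of this lemma for you to be measured against: it is imported wholesale by citation from Karandikar and Schnoebelen \cite{KarandikarS15}, so your argument stands or falls on its own --- and it stands. Your reduction is sound: $u{\downarrow}\setminus\{u\}$ is exactly the downward closure of the one-deletion deck $D(u)$, its maximal-length elements are exactly $D(u)$, so the hypothesis forces $|u|=|v|=n$ and $D(u)=D(v)$, turning the lemma into set-deck reconstruction for $n\ge 3$. The first-letter argument, which is indeed the crux, is executed correctly: if $u[1]\neq v[1]$ then $v[1]=u[2]$ and $u[1]=v[2]$; uniqueness of the deck element beginning with $u[2]$ collapses all deletions of $v$ at positions $\ge 2$ into one word, forcing $v[2]=\cdots=v[n]$ (via the standard fact that deleting positions $i<j$ gives the same word iff $w[i]=\cdots=w[j]$); and the resulting pair $u=ab^{n-1}$, $v=ba^{n-1}$ contradicts $D(u)=D(v)$ precisely when $n\ge 3$, correctly isolating the $ab$ versus $ba$ counterexample at $n=2$, which matches the remark following the lemma in the paper. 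One subtlety in your induction step deserves a sentence: when $u[2]$ equals the common first letter $a$, the position-$1$ deletion of $u$ also begins with $a$, so the deck words beginning with $a$ are not literally just the deletions at positions $\ge 2$; however, in that case the position-$1$ and position-$2$ deletions coincide as words, so stripping $a$ still yields exactly $D(u[2\ldots n])$, and your identity $D(u[2\ldots n])=D(v[2\ldots n])$ survives. The base case $n=3$ also checks out: the reversal observation $D(u^R)=D(u)^R$ is correct, and once first and last letters agree, membership of $u[2]u[3]$ in the common deck together with the symmetric membership for $v[2]v[3]$ forces the middle letters to be equal (if they differed, both would have to equal the first letter). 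Compared with simply invoking the citation, your proof is elementary and self-contained, and it makes visible that only the maximal-length words in the strict-subword set are actually needed, whereas the lemma as stated hands you the entire downward closure.
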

Note that \autoref{lem:uniquely-defining-subwords} does not hold for words of length $2$ since $ab {\downarrow} \setminus \{ab\} = \{\varepsilon, a, b\} = ba {\downarrow} \setminus \{ba\}$.

\begin{proof}[Proof of \autoref{thm:GPRM-1c}]
  We begin with the case $A = \{a,b,c\}$, i.e.\ $|A| = 3$.
  
  \noindent Recall that concatenation is $\Sigma_1$-definable in the structure $\FOconst{A}$. 
  Let $w_1,$ $\ldots,$ $w_n$ be the constant symbols that appear in the formula defining
  the concatenation relation.
  We choose $W = a^{m+1}b^{m+2}c^{m+3}$, where $m = \max_{1 \leq i \leq n}|w_i|$ is
  the maximal length among these constants.
  Since every word is a concatenation of letters, it now suffices to show that
  the letters $a, b, c$ and words $w_1, \ldots, w_n$ are all $\Sigma_1$-definable
  over $\FOsomeconst{A}{W}$.
  
  In the following we use $u \ssubword v$ as a shorthand for $u \subword v \wedge u \neq v$.
  The formula
  \[
    v_{3m + 5} \ssubword W \wedge \bigwedge_{i=0}^{3m+4} v_i \ssubword v_{i+1}
  \]
  defines a sequence of subwords $v_0, \ldots, v_{3m+5}$ of $W$ with $|v_i| = i$.
  In particular, we have $v_1 \in A$.
  If we repeat the same formula for two more sets of variables $u_0,\ldots,u_{3m+5}$, $t_0, \ldots, t_{3m+5}$, and additionally require
  \[
    v_1 \neq u_1 \wedge v_1 \neq t_1 \wedge u_1 \neq t_1,
  \]
  then we have defined $a$, $b$, and $c$, but only up to renaming of letters.
  To ensure $v_1 = c$ we use the following formula:
  \[
     u_1 \not\subword v_{m+3}' \wedge t_1 \not\subword v_{m+3}'
    \wedge v_{m+3}' \subword W \wedge \bigwedge_{i=0}^{m+2} v_i' \ssubword v_{i+1}'.
  \]
  It defines a sequence of $v_0', \ldots v_{m+3}'$ of subwords of $W$, among which the
  letters $u_1$ and $t_1$ do not occur.
  Therefore this sequence is comprised of words in $v_1^*$, and by choice of $W$
  a sequence of this length cannot exist for $v_1 = a$ or $v_1 = b$.
  Observe that also $|v_i'| = i$, which means that we have now additionally defined
  the words $c^0 = \varepsilon$ to $c^{m+3}$.
  Using two similar formulas with $m+3$ and $m+2$ variables, respectively, we can now
  define the words $b^0$ to $b^{m+2}$ and $a^0$ to $a^{m+1}$ as well.
  
  Observe that for any word $w$ and any letter $a'$, $|w|_{a'} = \ell$ is equivalent
  to $a^\ell \subword w \wedge a^{\ell+1} \not\subword w$.
  Using this fact we can fix the number of occurrences of each letter
  $a'$, and therefore also fix the total length of a word.
  We continue by defining the remaining words of length $\leq 2$, which are
  $ab, ba, ac, ca, bc, cb$.
  The formula
  \[
    a \subword s_{01} \wedge aa \not\subword s_{01} \wedge b \subword s_{01} \wedge bb \not\subword s_{01} \wedge c \not\subword s_{01} \wedge s_{01} \subword W
  \]
  defines the word $s_{01} = ab$.
  By copying this formula for a new free variable $s_{10}$ and replacing the last
  conjunct by $s_{10} \not\subword W$, we can likewise define $s_{10} = ba$.
  Similarly, we can also distinguish $ac$ from $ca$, as well as $bc$ from $cb$,
  since in both cases one of them is a subword of $W$ while the other is not.
  
  Finally we inductively define all constants up to length $m$. Let $w$ be a word
  of length $3 \leq |w| \leq m$.
  Then by induction hypothesis we have defined all constants up to length $|w| - 1$.
  Furthermore for every $a' \in A$ we have defined the constants $a'^{|w|_{a'}}$ and
  $a'^{|w|_{a'} + 1}$, since $|w|_{a'} \leq |w| \leq m$.
  By \autoref{lem:uniquely-defining-subwords} the word $w$ is uniquely defined by its set of
  strict subwords.
  Therefore the following formula defines $s = w$:
  \begin{align*}
    &a^{|w|_a} \subword s \wedge a^{|w|_a + 1} \not\subword s
    \wedge b^{|w|_b} \subword s \wedge b^{|w|_b + 1} \not\subword s
    \wedge c^{|w|_c} \subword s \wedge c^{|w|_c + 1} \not\subword s \\
    &\wedge \quad \bigwedge_{|w'| \leq |w|, w' \subword w} w' \subword s \quad
    \wedge \quad \bigwedge_{|w'| \leq |w|, w' \not\subword w} w' \not\subword s.
  \end{align*}
  
  Since $a,b,c$ and $w_1, \dots, w_n$ have at most length $m$, we have successfully
  defined all of the required constants.
  To conclude the case $|A| = 3$, we add existential quantifiers for all the variables
  representing constants that we do not use outside of these auxiliary formulas.
  Since we only ever used existential quantification, we have shown definability
  in the $\Sigma_1$-fragment of $\FOsomeconst{A}{W}$.
  
  The case $A = \{a_1,\ldots,a_k\}$ for $k \geq 4$ is very similar.
  We define the number $m$ in the same way as before and choose
  $W = a_1^{m + 1} \cdots a_k^{m + k}$.
  Then we can again define all constants $a_i^0$ to $a_i^{m+1}$ for every letter $a_i$.
  We also proceed to define all remaining words of length $2$, in the same way as
  in the previous case.
  Finally, \autoref{lem:uniquely-defining-subwords} and the fact
  $|w|_{a_i} = \ell \iff a_i^\ell \subword w \wedge a_i^{\ell+1} \not\subword w$
  allow us to inductively define all constants $w$ up to length $m$, like before.
\end{proof}

\begin{rem}
  In the proof of \autoref{thm:GPRM-1c} we show that there is a number $m \in \N$ such
  that for the alphabet $A = \{a_1,\ldots,a_k\}$ with $k \geq 4$ the word
  $a_1^{m + 1} \cdots a_k^{m + k}$ is a valid choice for the constant symbol $W$.
  Moreover, the proof still works for any number $m'>m$ and
  $W = a_1^{m' + 1} \cdots a_k^{m' + k}$.
  Therefore the theorem does not just hold for one fixed word $W$, but an infinite family
  of words $W_{m'}$.
\end{rem}

\section{Further consequences}\label{consequences}

In this section, we prove \autoref{level-by-level-constants}, \autoref{level-by-level-pure} and \autoref{sigma1-pure}.  When working
with higher levels ($\Sigma_i^0$ for $i\ge 2$) of the arithmetic hierarchy, it
will be convenient to use a slightly different definition than the one using
oracle Turing machines: \cite[Theorem~35.1]{kozen2006theory} implies that for
$i\ge 1$, a relation $R\subseteq (A^*)^k$ belongs to $\Sigma_{i+1}^0$ if and
only if it can be written as $R=\pi((A^*)^{k+\ell}\setminus S)$, where
$S\subseteq(A^*)^{k+\ell}$ is a relation in $\Sigma_{i}^0$ and $\pi\colon
(A^*)^{k+\ell}\to (A^*)^k$ is the projection to the first $k$ coordinates.

\begin{proof}[Proof of \autoref{level-by-level-constants}]
	It is immediate that every predicate definable in the
	$\Sigma_i$-fragment of $\FOconst{A}$ belongs to $\Sigma_i^0$, because
	the subword relation is recursively enumerable. We show the converse
	using induction on $i$, such that \autoref{thm:main} is the base case.  
	
	Now suppose that every relation in $\Sigma_i^0$ is definable in the
	$\Sigma_i$-fragment of $\FOconst{A}$ and consider a relation
	$R\subseteq (A^*)^k$ in $\Sigma_{i+1}^0$. Then we can write
	$R=\pi((A^*)^{k+\ell}\setminus S)$ for some $\ell\ge 0$, where
	$\pi\colon (A^*)^{k+\ell}\to (A^*)^k$ is the projection to the first
	$k$ coordinates, and $S\subseteq (A^*)^{k+\ell}$ is a relation in
	$\Sigma^0_i$. By induction, $S$ is definable by a $\Sigma_i$-formula
	$\varphi$ over $\FOconst{A}$. By negating $\varphi$ and moving all
	negations inwards, we obtain a $\Pi_i$-formula $\psi$ that defines
	$(A^*)^{k+\ell}\setminus S$. Finally, adding existential quantifiers
	for the variables corresponding to the last $\ell$ coordinates yields a
	$\Sigma_{i+1}$-formula for $R=\pi((A^*)^{k+\ell}\setminus S)$.
\end{proof}

Note that if instead of \autoref{thm:main} we use \autoref{thm:GPRM-1c}
as the base case in the induction above, it follows that there exists
a word  $W \in A^*$ such that \autoref{level-by-level-constants} also
holds for the structure $\FOsomeconst{A}{W}$ instead of $\FOconst{A}$.
This together with \autoref{thm:GPRM-1c} yields \autoref{rem:single-constant}.

Finally, we look at the expressive power of the pure logic $\FOpure{A}$.
We start by proving \autoref{level-by-level-pure}, which characterizes relations definable
in the $\Sigma_i$-fragment of $\FOpure{A}$ for $i \ge 2$.

\begin{proof}[Proof of \autoref{level-by-level-pure}]
	Clearly, every relation definable with a
	$\Sigma_i$-formula over $\FOpure{A}$ must be
	automorphism-invariant and must define a relation
	in $\Sigma_i^0$.

	Conversely, consider an automorphism-invariant relation $R\subseteq
	(A^*)^k$ in $\Sigma_i^0$. Then $R$ is definable using a
	$\Sigma_i$-formula $\varphi$ with free variables $x_1,\ldots,x_k$ over
	$\FOconst{A}$ by \autoref{level-by-level-constants}. Let
	$w_1,\ldots,w_\ell$ be the constants occurring in $\varphi$. From
	$\varphi$, we construct the $\Sigma_i$-formula $\varphi'$ over
	$\FOpure{A}$, by replacing each occurrence of $w_j$ by a fresh variable
	$y_j$.
	
	It was shown in \cite[Sections 4.1 and 4.2]{KarandikarS15} that from
	the tuple $(w_1,\ldots,w_\ell)\in (A^*)^\ell$, one can construct a
	$\Sigma_2$-formula $\psi$ with free variables $y_1,\ldots,y_\ell$ over
	$\FOpure{A}$ such that $\psi(u_1,\ldots,u_\ell)$ is true if and only if
	there exists an automorphism of $\FOpure{A}$ mapping $u_j$ to $w_j$ for
	each $j$. We claim that the formula $\chi=\exists y_1,\ldots,y_\ell\colon
	\psi\wedge \varphi'$ defines the set $R$. Since $\psi$ belongs to $\Sigma_2$
	and thus $\chi$ belongs to $\Sigma_i$, this implies \autoref{level-by-level-pure}.

	Clearly, every $(v_1,\ldots,v_k)\in R$ satisfies $\chi$. Moreover, if
	$\chi(v_1,\ldots,v_k)$, then there are $u_1,\ldots,u_\ell\in A^*$ with
	$\varphi'(v_1,\ldots,v_k,u_1,\ldots,u_\ell)$ and an automorphism $\alpha$
	mapping $u_j$ to $w_j$ for each $j$. Since $\alpha$ is an
	automorphism, the formula $\varphi'$ is also satisfied on the tuple
	$(\alpha(v_1),\ldots,\alpha(v_k),\alpha(u_1),\ldots,\alpha(u_\ell))=(\alpha(v_1),\ldots,\alpha(v_k),w_1,\ldots,w_\ell)$
	and thus we have $(\alpha(v_1),\ldots,\alpha(v_k))\in R$.  
	Since $R$ is automorphism-invariant, this implies
	$(v_1,\ldots,v_k)\in R$.
\end{proof}

\noindent The expressiveness of the existential fragment of $\FOpure{A}$ is not well understood.
Partial results are summarized in \autoref{sigma1-pure}.

\begin{proof}[Proof of \autoref{sigma1-pure}]
	Take a recursively enumerable, but undecidable subset $S\subseteq\N$.
	Fix a letter $a \in A$ and define the unary language $L=\{a^n \mid n\in S\}$.
	By \cite[Theorem~3.5]{HalfonSZ17arxiv} there exists a word $W \in A^*$
	and a $\Sigma_1$-formula $\varphi(x)$ over $\FOsomeconst{A}{W}$ which defines $L$.
	Consider the formula $\varphi'$ in the $\Sigma_1$-fragment of $\FOpure{A}$
	obtained by replacing each occurrence of $W$ by a fresh variable~$y$.
  Then $(v,W)$ satisfies $\varphi'$ if and only if
	$v \in L$. Thus, $\varphi'$ defines an undecidable relation.
	
	For the second statement, we claim that every language $L\subseteq A^*$
	that is $\Sigma_1$-definable in $\FOpure{A}$ satisfies
	$A^*LA^*\subseteq L$. Hence, many automorphism-invariant regular languages
	such as $\bigcup_{a\in A} a^*$ are not definable.
	Note that for $a\in A$ and $u,v\in A^*$, we
	have $u\subword v$ if and only if $au\subword av$.   Thus, every
	$\Sigma_0$-definable relation $R\subseteq (A^*)^k$ satisfies
	$(w_1,\ldots,w_k)\in R$ if and only if $(aw_1,\ldots,aw_k)\in R$.
	Symmetrically, $(w_1,\ldots,w_k)\in R$ is equivalent to
	$(w_1a,\ldots,w_ka)\in R$. As a projection of a $\Sigma_0$-definable
	relation, $L$ thus satisfies $A^*LA^*\subseteq L$.
\end{proof}

\noindent \autoref{sigma1-pure} raises the question whether there are undecidable $\Sigma_1$-definable \emph{languages} in $\FOpure{A}$,
which we leave as an open problem. In fact, all examples of $\Sigma_1$-definable languages that we have constructed are regular.
While each individual $\Sigma_1$-definable language could be decidable,
the following observation implies that the membership problem for $\Sigma_1$-definable languages is \emph{not} decidable,
if the formula is part of the input.

\begin{obs}
	\label{thm:magicConstantUndecidable}
	Let $|A| \ge 2$. There exists a word $W \in A^*$ such that the following problem is undecidable:
	Given a $\Sigma_1$-formula $\varphi(x)$ over $\FOpure{A}$, does $W$ satisfy $\varphi(x)$?
\end{obs}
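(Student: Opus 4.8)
The plan is to leverage the earlier result~\cite[Theorem~3.5]{HalfonSZ17arxiv}, which provides a fixed word $W \in A^*$ (for $|A|\ge 2$) such that the truth problem for the $\Sigma_1$-fragment over $\FOsomeconst{A}{W}$ is undecidable. I want to transport this undecidability from the structure \emph{with} the constant $W$ to the pure logic \emph{without} constants, by using a free variable in place of $W$ and plugging $W$ back in as the query word. The same $W$ that witnesses undecidability in \cite{HalfonSZ17arxiv} will serve as the magic constant in the statement.

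First I would take the undecidable instance guaranteed by \cite[Theorem~3.5]{HalfonSZ17arxiv}: there is a fixed $W\in A^*$ so that deciding, given a $\Sigma_1$-sentence $\vartheta$ (equivalently, a $\Sigma_1$-formula with the single constant $W$) over $\FOsomeconst{A}{W}$, whether $\vartheta$ holds, is undecidable. Given such an instance $\vartheta$, I would produce a $\Sigma_1$-formula $\varphi(x)$ over the \emph{pure} logic $\FOpure{A}$ by syntactically replacing every occurrence of the constant $W$ in $\vartheta$ by the free variable $x$. This is a computable transformation on formulas, and since the replacement does not introduce any quantifiers, $\varphi(x)$ remains in the $\Sigma_1$-fragment. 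By construction, the pure formula $\varphi(x)$ evaluated at $x = W$ is logically identical to the original formula $\vartheta$ interpreted over $\FOsomeconst{A}{W}$; that is, $W$ satisfies $\varphi(x)$ if and only if $\vartheta$ holds in $\FOsomeconst{A}{W}$.

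This gives a many-one reduction from the undecidable truth problem of \cite[Theorem~3.5]{HalfonSZ17arxiv} to the problem in the statement: given a $\Sigma_1$-formula $\varphi(x)$ over $\FOpure{A}$, decide whether $W$ satisfies $\varphi(x)$. Since the preimage problem is undecidable and the reduction is computable, the target problem is undecidable as well, which is exactly the claim. I do not expect a genuine obstacle here, since the construction is purely syntactic and the correctness follows immediately from the fact that substituting the free variable by $W$ recovers the original interpretation; the only point requiring a moment's care is to confirm that the formula-to-formula map is effective and stays within $\Sigma_1$, both of which hold because we merely substitute a variable for a constant without adding quantifiers. (Note that this observation is essentially the effective, formula-as-input counterpart of the reasoning in the proof of \autoref{sigma1-pure}, where the same replacement trick showed that a \emph{single} undecidable relation is $\Sigma_1$-definable in $\FOpure{A}$.)
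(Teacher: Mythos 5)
Your proposal is correct and is essentially identical to the paper's own proof: both take the word $W$ from \cite[Theorem~3.5]{HalfonSZ17arxiv} whose $\Sigma_1$ truth problem over $\FOsomeconst{A}{W}$ is undecidable, and reduce it to the stated problem by the computable syntactic substitution of a fresh free variable $x$ for every occurrence of the constant $W$. Your additional checks (effectiveness of the map and preservation of the $\Sigma_1$ fragment) are exactly the implicit content of the paper's one-line argument.
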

\begin{proof}
	By \cite[Theorem~3.5]{HalfonSZ17arxiv} there exists a word $W \in A^*$ so that the \emph{truth problem}
	for $\Sigma_1$-formulas over $\FOsomeconst{A}{W}$ is undecidable, which asks whether a given a sentence
	$\varphi$ (formula without free variables) is true.
	In other words, testing whether $W$ satisfies the formula obtained from $\varphi$
	by replacing each occurrence of $W$ by a fresh free variable is undecidable.
\end{proof}

\section{Conclusion}\label{sec:conclusion}
We have shown how to define all recursively enumerable relations in the existential fragment of the subword order with constants for each alphabet $A$ with $|A| \geq 3$. 
If $|A|=1$, then the relations definable in $\FOconst{A}$ correspond to
relations over $\N$ definable in $(\N,\le)$ with constants. Hence, this case is
very well understood: This structure admits quantifier
elimination~\cite[Theorem 2.2(b)]{peladeau1992logically}, which implies that
the $\Sigma_1$-fragment is expressively complete and also that a subset of
$A^*$ is only definable if it is finite or co-finite. In particular,
\autoref{thm:main} does not hold for $|A|=1$.

We leave open whether \autoref{thm:main} still holds over a binary alphabet.
If this is the case, then we expect that substantially new techniques are required.
In order to express non-trivial relations over two letters, our proof often
uses a third letter as a separator and marker for ``synchronization points''
in subword embeddings.

\bibliographystyle{alphaurl}
\bibliography{references.bib}

\appendix

\end{document}